\newcites{appendix}{References}
\newtheorem{theorem}{Theorem}
\newtheorem{assumption}{Assumption}
\newtheorem{proposition}{Proposition}
\newtheorem{lemma}{Lemma}
\DeclareMathOperator*{\argmin}{arg\,min}
\newcommand{\tmatch} {{\text{matched}}}
\newcommand{\tweight} {{\text{weighted}}}
\newcommand{\tadj} {{\text{adj}}}
\newcommand{\E}{{\text{E}}}
 \newcommand{\eff}{{\textit{Effect}}}
\newcommand{\blind}{0}
 \title{\normalsize New Estimands for Experiments with Strong Interference}
\author{\normalsize David Choi}
\date{\normalsize \today, \ \currenttime}
\begin{document}
\maketitle

\abstract{In experiments that study social phenomena, such as peer influence or herd immunity, the treatment of one unit may influence the outcomes of others. Such ``interference between units'' violates traditional approaches for causal inference, so that additional assumptions are often imposed to model or limit the underlying social mechanism. For binary outcomes, we propose new estimands that can be estimated without such assumptions, allowing for interval estimates assuming only the randomization of treatment. However, the causal implications of these estimands are more limited than those attainable under stronger assumptions, showing only that the treatment effects under the observed assignment varied systematically as a function of each unit's direct and indirect exposure, while also lower bounding the number of units affected.

}

\section{Introduction} \label{sec: intro}

In experiments where the assumption of ``no interference between units'' does not apply, the outcome of each unit may depend not only on their own treatment assignment, but also on the treatment of others. Examples include vaccination studies where units receiving the placebo may be protected by the vaccine through herd immunity \citep{hudgens2008toward}, or social networks in which units may be influenced by the actions of others \citep{cai2021causal}. In such experiments, often an assumption is made to limit the effects of interference; for example, one might divide the units into groups that are assumed not to interfere with each other, or assume that most units do not interfere with each other. However, such assumptions may sometimes be implausible or overly questionable, precluding their usage.

For such settings, we propose new estimands that are more limited in their causal interpretation, but can be interval estimated with no assumptions on interference. These estimands correspond to contrasts in attributable effects. The attributable effect was proposed in \cite{rosenbaum2001effects}. At the unit level, it is the difference between each unit's observed outcome, and their outcome under a prespecified counterfactual scenario, such as the counterfactual in which all units receive the control. The intuition for this effect is the following: if a unit had a certain outcome under the observed treatment assignment, but would have had a different outcome had all units received the control, then colloquially one could say that under the observed assignment, the unit was affected by the treatments. Since the attributable effect depends on the observed treatment assignment, it is not a parameter, but rather a latent random variable.

We propose to partially characterize the unit-level attributable effect by taking contrasts -- that is, by comparing its values for units with different levels of direct treatment, or different levels of indirect exposure to the treatment of others. Such contrasts may occur as non-intercept terms in a regression, such as the regression of the attributable effect against treatment,
\begin{equation}\label{eq: intro regression}
  \eff_i \sim \beta_0 + \beta_1 \cdot \textit{Treatment}_i
\end{equation}
where $\eff_i$ is the attributable effect for unit $i$, and $\textit{Treatment}_i$ is their observed treatment assignment. In this regression, $(\beta_0,\beta_1)$ are given by
\begin{equation}\label{eq: intro least squares}
  \beta_0,\beta_1 = \arg \min_{\beta_0,\beta_1} \sum_{i=1}^N \left(\eff_i - \beta_0 - \beta_1 \cdot \textit{Treatment}_i\right)^2
\end{equation}
and $\beta_1$ compares the attributable effects of the treated and control. As a more complex example, we might regress effects against each unit's treatment, their number of treated friends, and an interaction term,
\begin{align}\label{eq: intro regression 2}
  \eff_i \sim \beta_0 + \beta_1 \cdot \textit{Treatment}_i + \beta_2 \cdot \textit{\#Treated Friends}_i + \\ 
  \nonumber \beta_3 \cdot (\textit{Treatment}_i \cdot \textit{\#Treated Friends}_i)
\end{align}
In both \eqref{eq: intro regression} and \eqref{eq: intro regression 2}, $\eff$ is not observed, and hence statistical inference is required to estimate the regression parameters.

The main result of this paper is that in regression specifications such as \eqref{eq: intro regression} and \eqref{eq: intro regression 2} with binary outcomes, contrasts such as $\beta_1$, $\beta_2$, and $\beta_3$ (but not the intercept term $\beta_0$) can be interval estimated with no assumptions beyond those regarding randomization of treatment. Thus, these parameters can be reliably inferred even when interference between individuals is difficult to accurately model, including settings in which an observed network only crudely reflects the true underlying social mechanisms. The approach can be extended to other regression specifications, including settings where exposure to treatment is non-identically distributed. It also generalizes beyond regression to weighted or matching-based comparisons of units. Novel computations are required to bound the interval widths. 

In the regression \eqref{eq: intro regression}, the coefficient $\beta_1$ measures the relative attributable effect of being treated versus not treated. To give an example, consider a vaccination trial where those receiving the placebo might be indirectly affected by the vaccines, with the outcome being incidence of disease. In this setting, estimates of $\beta_1$ could be used to show that vaccinated units were relatively more protected than those who received the placebo, by an additional $\beta_1$ cases prevented (or fewer cases caused) per capita, under the observed treatment assignment. Similarly, in \eqref{eq: intro regression 2}, estimates of $\beta_1$, $\beta_2$, and $\beta_3$ could be used to show that the relative attributable effects were consistent with those of a working vaccine that imbued herd immunity: under the observed treatment assignment, vaccinated units were relatively more protected than placebo units, while placebo units were relatively more protected when friends were vaccinated.

In the absence of assumptions on interference, hypothesis testing can be used to reject Fisher's sharp null of no effect, as well as analogously defined nulls of no interference and similar extensions \citep{athey2017exact, aronow2012general, basse2019randomization, pouget2017testing}. However, without further assumptions these tests do not imply interval estimates for the treatment effect. For example, rejection of the sharp null of no effect shows only that at least one unit was affected. This leaves open that the effects may have been small, and that a difference in average outcomes for the treated and control may have been primarily due to unmeasured pre-treatment differences between the two groups. It even allows that the effect may have been in the opposite direction from that suggested by the observations. In contrast, our estimands can establish whether (and to what extent) a difference in observed outcomes can be attributed to the effects of the treatments. Additionally, they answer whether the effects varied systematically as a function of each unit's direct and indirect exposure to treatment. Finally, our estimands will often imply better lower bounds on the number of affected units. 

\paragraph{Limitations}





To illustrate the difficulties that can arise when inteference is completely unrestricted, imagine a stylized example in which a colony of ants is subjected to a randomized trial. (Ant colonies are highly communal, suggesting that high levels of interference may be possible.) Treated ants are exposed to a drug, while all ants are marked in some innocuous manner to identify their treatment assignment. After some time, survival rates are recorded for the treated and control groups. 

We consider two difficulties that may arise in our fictional setting. First, it can be seen that even if the treated ants had a significantly higher survival rate than the control group, the overall effects of the treatment may still have been harmful for all units. For example, if treated ants became stronger but also ceased activities necessary for colony survival, such as foraging for food, then all ants might have lower survival rates, with the lowest rates for the control group. Second, it can be seen that the survival rates need not converge to their expectation. For example, suppose that the colony collapses if a particular ant is treated (whose identity is unknown to the researcher), and functions normally otherwise. If this ant is treated with probability $p$, then the colony collapses with probability $p$ and functions normally otherwise, regardless of the experiment size and randomization of other units. 

The implications of these difficulties for our estimands are the following. First, without the intercept term $\beta_0$, which is not identified without additional assumptions, we can only infer relative differences in the attributable effect. This leaves open the possibility that ``more protected'' may mean ``less harmed,'' or that a treatment that benefits the individual may have negative global consequences, causing everyone to become worse off. Second, the attributable effect is a random variable, and our inference is not for its distribution or expectation, but rather for its latent value. This means our confidence statements will not address how the units would be affected under a new treatment assignment, but only how they {\it were} affected under the observed one. 

Due to these limitations, our inference does not result in a confidence statement for the effects of a hypothetical intervention. Such statements cannot be made without additional assumptions on interference. Instead, our intervals are for a partial characterization of the effects that occurred under the observed treatment assignment. This stops short of giving a definitive policy recommendation, but may still be of interest to researchers and decision makers if stronger inferences cannot be credibly made.

\subsection{Related Work} \label{sec: related work}

Our work is part of a growing literature on statistical inference in the presence of interference between units. Approaches vary in their target estimand and severity of modeling assumptions, with stronger inferences generally requiring stronger assumptions.

Under the assumption of a known exposure model \citep{aronow2017estimating, eckles2017design}, the outcome of each unit depends on the treatments through a set of statistics, termed the unit exposure; for example, a unit's exposure might equal the treatment of its neighbors in an ellicited social network. In this setting, average treatment effects are generalizable to the difference in outcomes between any two treatment allocations (such as full treatment and full control).

A growing body of work considers the expected effects of intervening on the treatment of a single unit, assuming the treatment of all other units to be randomized according to the experiment design  \citep{hudgens2008toward, savje2017average, leung2019causal, liu2014large, hu2021average, li2020random}. For this class of estimands, unbiased point estimation requires no assumptions beyond randomization; however, variance estimates and valid confidence intervals require assumptions on interference, although not necessarily on its exact structure. \cite{leung2019causal} requires a bandwidth parameter to be chosen according to the strength of the assumed interference;  \cite{savje2017average} proposes variance inflation factors based on the degrees of the underlying dependency graph; and \cite{li2020random} consider various assumptions in conjunction with distributional interactions on a random network.

With the exception of Appendix \ref{sec: cholera real}, we will make no assumptions on interference, so that it may be unknown and unbounded in strength. Previous work in this setting includes testing for null hypotheses related to peer effects, such as the hypothesis of direct effects but no spillovers \citep{athey2017exact, aronow2012general, basse2019randomization, pouget2017testing}, and estimation of attributable effects on the ranks of the units when ordering by outcome \citep{rosenbaum2007interference}. 

\paragraph{Other related works} Estimation under parametric modeling assumptions on interference is studied in \cite{loh2018randomization} and \cite{bowers2013reasoning}, and under an assumption that the effect is nonnegative in \cite{choi2017estimation}. \cite{cortez2022staggered} estimate total effects assuming bounds on the extent to which treatments can interact nonlinearly. \cite{cai2021causal} discusses vaccine-specific effects. Inference in observational studies with interference is studied in  \cite{forastiere2020identification} and \cite{ogburn2017causal}. 
\cite{rigdon2015exact} use attributable effects for binary outcomes, but in the no interference setting. A technical report \citep{choi2018manuscript} considers the estimand $\tau_1$ given by \eqref{eq: tau1}, but without generalization to other regressions, weighting, or matching. 

\section{Basic Approach} \label{sec: basic}

To introduce the main concepts, and to draw parallels with existing approaches, we first describe our method for the contrast arising in the simple regression \eqref{eq: intro regression} that measures differences in attributable effects for the treated and control. 


\subsection{Defining the Attributable Effect} \label{sec: setup}

We assume a randomized experiment on $N$ units, and let $X \in \{0,1\}^N$ denote the treatment assignment (whose distribution is known), and $Y\in \{0,1\}^N$ the observed outcomes. We assume unrestricted inteference, so that each unit's outcome depends potentially on all $N$ treatments:
\begin{equation}\label{eq: interference}
Y_i = f_i(X_1,\ldots,X_N),
\end{equation}
where the outcome mappings $\{f_i\}$ are arbitrary and unknown. 

Let $\theta\in \{0,1\}^N$ denote the counterfactual outcomes that would result under a ``uniformity trial'' \citep{rosenbaum2007interference}. A uniformity trial may be any counterfactual in which the treatments are not administered according to $X$, and hence the outcomes are unaffected by it. A common choice for $\theta$ is the counterfactual of ``full control'', in which all units receive the control and $\theta$ is given by
\[ \theta_i = f_i(0,\ldots,0). \] 
We typically make no assumptions on $\theta$ except on its range of allowable values. In particular, we do not assume the data to be informative for the values of $\theta$.

Let $A \in \{-1,0,1\}^N$ denote the {\it individual-level attributable treatment effect}, with entries given by
\[ A_i = Y_i - \theta_i, \]
which was denoted by $\eff_i$ in Section \ref{sec: intro}. For example, in a vaccination trial the treatment $X_i$ might denote whether unit $i$ received the vaccine or a placebo, $Y_i$ whether the unit had the disease, and $\theta_i$ the counterfactual outcome that would have resulted if all units were given the placebo. Letting an outcome of $1$ denote the disease and $0$ otherwise, an effect of $A_i = -1$ would imply that $Y_i=0$ and $\theta_i=1$, meaning that unit $i$ did not have the disease, but would have if the placebo had been administered to all participants. In such a case, it can be said that administering the vaccine prevented this unit from having the disease. This can occur even if the unit received the placebo in the trial, due to social effects such as herd immunity. Thus $A$ clearly has a causal interpretation. 

\subsection{A Simple Estimand}

Our goal is to partially characterize $A$ by estimating a contrast term appearing in a regression, or in a weighted or matching-based comparison of the unit outcomes. Let $N_1$ and $N_0$ denote the number of treated and control units, and let $\bar{X} = N_1/N$ denote the average treatment. As a simple example, we consider the term $\tau_1$ appearing in the following regression:
\begin{equation*}\label{eq: tau regression}
  A_i \sim \tau_0 + \tau_1 (X_i - \bar{X}),
\end{equation*}
which is a slight modification of \eqref{eq: intro regression}.
It can be seen that the intercept $\tau_0$ and slope $\tau_1$ are given by
\begin{align} 
\tau_0 &= \frac{1}{N} \sum_{i=1}^N A_i \label{eq: tau0} \\
\tau_1 &= \frac{1}{N_1} \sum_{i:X_i = 1} A_i - \frac{1}{N_0} \sum_{i:X_i=0} A_i, \label{eq: tau1}
\end{align}
so that $\tau_0$ is the average attributable effect, and $\tau_1$ is the difference in attributable effects for the treated and control. Thus $\tau_1$ answers whether (and to what extent) the units receiving the treatment were affected differently than those receiving the control, under the observed treatment assignment. 

\paragraph{$\mathbf{\tau_0}$ Cannot be Estimated} Without additional assumptions, the variable $\tau_0$ cannot be estimated. To see this, consider that
\[ \tau_0 = \bar{Y} - \bar{\theta},\]
where $\bar{Y}$ and $\bar{\theta}$ denote the averages of $\{Y_i\}_{i=1}^N$ and $\{\theta_i\}_{i=1}^N$. Since $\theta$ is unobserved and $Y$ may be uninformative for the values of $\theta$, without further assumptions we can only say that $\bar{\theta}$ may be any value in $[0,\,1]$, and hence that $\tau_0 \in [\bar{Y} - 1, \, \bar{Y}]$.

\paragraph{Estimation of $\tau_1$}  
 Conservative prediction intervals (PIs) can be found for the unobserved random variable $\tau_1$, when $X$ is assigned by sampling without replacement. Let $\hat{\tau}_1$ denote a point estimate for $\tau_1$ equal to the difference in observed outcomes,
  \begin{equation} \label{eq: hat tau 1}
  \hat{\tau}_1 = \frac{1}{N_1} \sum_{i:X_i = 1} Y_i - \frac{1}{N_0} \sum_{i:X_i=0} Y_i.
  \end{equation}
It can be seen that the estimation error $\hat{\tau}_1 - \tau_1$ is given by
\begin{align} \label{eq: tau error}
 \hat{\tau}_1 - \tau_1 & = \sum_{i:X_i=1} \frac{\theta_i}{N_1}  - \sum_{i:X_i=0} \frac{\theta_i}{N_0},
\end{align}
which is the difference between a random sample drawn from $\{\theta_i\}_{i=1}^N$ and its complement. As a result, we can apply a finite population central limit theorem to find a prediction interval for $(\hat{\tau}_1 - \tau_1)$, and hence for the estimand $\tau_1$. The width of this interval will depend on the unknown variance $\sigma_\theta^2$, given by
  \begin{align} \label{eq: theta mean and var}
  \bar{\theta} & = \frac{1}{N} \sum_{i=1}^N \theta_i, & \sigma_\theta^2 & = \frac{1}{N} \sum_{i=1}^n (\theta_i - \bar{\theta})^2.
  \end{align}
Proposition \ref{th: CLT tau} gives the prediction interval, using the bound $\sigma_\theta^2 \leq 1/4$ which holds for binary outcomes.\footnote{In the terminology of \cite{imbens2004confidence}, $\sigma_\theta^2$ has identification set $[0,\, 1/4]$, inducing an identification set for the prediction interval of $\hat{\tau}_1 - \tau_1$.}  It is proven in Section \ref{sec: CLT tau proof} of the supplement. 

\begin{proposition} \label{th: CLT tau}
  Given $\rho \in (0,1)$, assume a sequence of experiments in which $N \rightarrow \infty$, where treatment is assigned by sampling $N_1$ units without replacement; $N_1/N$ converges to $\rho$; and $\theta \in \{0,1\}^N$.   Then $\tau_1$ is contained by the interval
  \begin{align*}
    \hat{\tau}_1 \pm z_{1-\alpha} \sqrt{\frac{N}{N-1}\frac{N}{N_1 N_0}\frac{1}{4}}
  \end{align*}
with probability asymptotically lower bounded by $1-2\alpha$.

  %
\end{proposition}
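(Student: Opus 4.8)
The plan is to collapse the estimation error \eqref{eq: tau error} into a single standardized sample mean drawn from the fixed finite population $\theta$, compute its variance exactly, and then apply a finite-population central limit theorem — with the binary bound $\sigma_\theta^2 \le 1/4$ entering only at the end to widen the interval conservatively.

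First I would rewrite \eqref{eq: tau error}. Setting $\hat\theta_1 = N_1^{-1}\sum_{i:X_i=1}\theta_i$ and $\hat\theta_0 = N_0^{-1}\sum_{i:X_i=0}\theta_i$, the identity $N\bar\theta = N_1\hat\theta_1 + N_0\hat\theta_0$ lets me eliminate $\hat\theta_0$, so the error collapses to $\tau_1 - \hat\tau_1 = \hat\theta_1 - \hat\theta_0 = (N/N_0)(\hat\theta_1 - \bar\theta)$. Because treatment is assigned by sampling $N_1$ units without replacement, the treated indices form a simple random sample, so $\hat\theta_1$ is a sampling-without-replacement mean with $\E\,\hat\theta_1 = \bar\theta$ and $\operatorname{Var}(\hat\theta_1) = \frac{N_0}{N-1}\frac{\sigma_\theta^2}{N_1}$ from the standard finite-population variance formula (using $\sum_i(\theta_i-\bar\theta)^2 = N\sigma_\theta^2$). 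Substituting gives the exact expression $\operatorname{Var}(\tau_1 - \hat\tau_1) = \frac{N}{N-1}\frac{N}{N_1 N_0}\sigma_\theta^2$, which is precisely the squared half-width in the statement once $\sigma_\theta^2$ is replaced by its binary upper bound $1/4$.

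Next I would invoke a finite-population central limit theorem for sampling without replacement (H\'ajek, or Erd\H{o}s--R\'enyi) to establish that the standardized error $(\tau_1 - \hat\tau_1)/\sqrt{\operatorname{Var}(\tau_1-\hat\tau_1)}$ converges in distribution to $N(0,1)$. Since substituting $\sigma_\theta^2 = 1/4$ can only enlarge the interval relative to the true standard deviation, and $z_{1-\alpha} = \Phi^{-1}(1-\alpha)$, the asymptotic normality then yields coverage at least $2\Phi(z_{1-\alpha}) - 1 = 1 - 2\alpha$, as claimed.

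The main obstacle is that this CLT requires a Lindeberg/H\'ajek regularity condition that can fail when $\theta$ is nearly constant, i.e.\ when $\sigma_\theta^2 \to 0$ (too few ones or too few zeros among the $\theta_i$), where the standardized error need not be asymptotically normal. I would dispatch this by a subsequence argument: along any subsequence, either $\sigma_\theta^2$ is bounded away from $0$, in which case the H\'ajek condition holds for binary data and the CLT above applies, or $\sigma_\theta^2 \to 0$, in which case I bound coverage directly by Chebyshev's inequality — the miss probability is at most $\operatorname{Var}(\tau_1-\hat\tau_1)/t^2 = 4\sigma_\theta^2/z_{1-\alpha}^2 \to 0$, where $t$ is the interval half-width, so coverage tends to one. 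Since both cases give limiting coverage at least $1-2\alpha$, combining them across all subsequences delivers the stated asymptotic lower bound.
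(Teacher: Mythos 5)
Your proposal is correct and follows essentially the same route as the paper's proof: the same reduction of the error $\tau_1 - \hat{\tau}_1$ to a scaled without-replacement sample mean of $\theta$ with exact variance $\frac{N}{N-1}\frac{N}{N_1 N_0}\sigma_\theta^2$, the same appeal to a finite-population CLT, and the same two-case split (CLT when $\sigma_\theta^2$ is not degenerate, Chebyshev's inequality when it is) to secure coverage uniformly over $\theta$. The only cosmetic difference is the splitting device --- the paper partitions the sequence of experiments at the explicit threshold $\sigma_\theta^2 \gtrless (\log N)/N$, whereas you pass to convergent subsequences of $\sigma_\theta^2$ --- but both serve the identical purpose.
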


\paragraph{Number of Affected Units}

It can be seen for binary outcomes that 
\begin{equation}\label{eq: tau number of affected units}
\sum_{i=1}^N 1\{Y_i \neq \theta_i\}  \geq |\tau_1|\cdot \min(N_1, N_0),
\end{equation}
as proven in the supplement using H\"older's inequality. Thus $|\tau_1|\cdot \min(N_1,N_0)$ lower bounds the number of affected units under the observed treatment assignment, in the sense that their outcomes were changed from what would have occurred under the uniformity trial. It also lower bounds the number of units affected by treatment in the sense that their potential outcome mapping $f_i$ is not a constant.

\paragraph{Distribution of $\mathbf{\tau_1}$ cannot be estimated} To give a negative example, suppose that the outcome of every unit depends on the treatment of unit 1, according to 
\begin{equation}\label{eq: high influence unit}
   Y_i = X_1 g_i(X_i) + (1-X_1) h_i(X_i).
\end{equation}
  The distribution of $\tau_1$ cannot be estimated from an experiment -- it depends on both $g$ and $h$, but only one of the two will be observed.

\subsection{Discussion} \label{sec: tau discussion}


Our inference is for the latent value of $\tau_1$ under the observed treatment assignment, and not for its distribution or expectation. As a result, we don't know whether the value of $\tau_1$ is sensitive to random perturbation of the treatment assignment, or equivalently whether its value would replicate if the experiment was repeated on an identical set of units with a new draw of the treatment assignment. 

This limitation is inherent to the setting that we study, as illustrated by \eqref{eq: high influence unit} as an extreme case. Here the treatment 
of a particular unit affected all others, and hence the experimental outcomes could vary significantly with that unit's treatment, regardless of the experiment size. Thus non-replicability of experiments may arise not only from differences between populations, or differences between experimental designs, but also from intrinsic variability caused by the random treatment of high-influence units.

This suggests that in the absence of assumptions on interference, multiple experiments may be required to be confident that the effects of the treatment are not sensitive to random perturbation of the treatment assignment. The estimand $\tau_1$ provides an intermediate finding that can be learned from a single experiment.  It lower bounds the number of affected units, and quantifies the extent to which an observed difference in outcomes can be attributed to the effects of the treatment, as opposed to being attributed to pre-treatment differences between units. This may help to determine whether further experiments are of interest. 

Knowing that the units of an experiment were affected in a certain way will be of interest to a decision maker considering the same policy. To communicate the limited nature of $\tau_1$ in such settings, one might state that in the absence of assumptions on interference, no statistical guarantee can be made regarding the replicability of a single experiment -- even for an identical population -- due to the unknown variability of the system under study. This is consistent with general wisdom regarding studies of complex phenomena, such as the recommendation by \cite[p. 161]{national2019reproducibility} that
\begin{quote}
  Anyone making personal or policy decisions based on scientific evidence should be wary of making a serious decision based on the results, no matter how promising, of a single study. 
\end{quote}
If a decision must be based on a single experiment, the limitations of $\tau_1$ may help to convey the need for caution -- for example, by placing more importance on safeguards, such as monitoring for adverse events.





\paragraph{Example}
We give an example where the relevance of $\tau_1$ is competitive with estimands whose inference requires stronger assumptions. Suppose that individuals were randomly selected for military service through a draft lottery. The drafted individuals then fight in a war, whose outcome affects all units. Interference seems plausible, as the drafted units determine the outcome of the war, and additionally they might form social ties that persist afterwards. An important question might be whether drafted units were affected more negatively than those who were undrafted. If so, we might allocate funds or create policies to help them.  For example, \cite{angrist1990lifetime} considers the effects of the Vietnam War draft lottery on lifetime earnings. For this purpose, it is sufficient to estimate the latent value of $\tau_1$. Knowledge of its expectation or distribution would not help further, as the question at hand is not to consider a hypothetical intervention in which the units are drafted again, but rather how we should react to the draft that actually occurred.

\subsection{Comparison with other Estimands}


\paragraph{SATE and SATT}

In the absence of interference, each unit's outcome depends only on their own treatment,
\begin{equation}\label{eq: sutva}
 Y_i = f_i(X_i),
\end{equation}
and the sample average treatment effect (SATE) is defined to be
\[ \textup{SATE} = \frac{1}{N} \sum_{i=1}^N (f_i(1) - f_i(0)),\]
the difference in outcomes between treatment and control, averaged over the units. Thus the SATE gives the total effects of the treatment. In contrast, $\tau_1$ gives only relative effects, and only under the observed treatment assignment. In settings where the decision is whether to distribute the treatments more broadly, information provided by the SATE is far more actionable, provided that \eqref{eq: sutva} holds.

Under \eqref{eq: sutva}, and for $\theta$ equal to the counterfactual of full control, $\tau_1$ equals the sample average treatment effect on the treated (SATT), which is given by
\[ \textup{SATT} = \frac{1}{N_1} \sum_{i:X_i=1} (f_i(1) - f_i(0)) \]
The SATT gives the effect on intervening on the units that received treatment. It is a random variable whose value may differ from the SATE, particularly in observational studies where certain units may be more likely to receive treatment.






\paragraph{Comparison of Prediction Intervals} 

The prediction intervals for SATT and $\tau_1$ are closely related. Letting $\bar{\theta}_x$ for $x \in \{0,1\}$ denote the average of $\{\theta_i: X_i = x\}$ and defining $\bar{Y}_x$ analogously, the estimation error for the SATT is given by $\hat{\tau}_1 - \textup{SATT} = \bar{\theta}_0 - \bar{\theta}_1$, where we have used \eqref{eq: sutva} and $\bar{Y}_0 = \bar{\theta}_0$. This error is zero mean and has variance \citep{sekhon2021inference}
\begin{equation}\label{eq: var ATT}
 \operatorname{Var}\left(\hat{\tau}_1 - \textup{SATT}\right) = \frac{N}{N-1}\frac{N}{N_1N_0} \sigma_\theta^2,
\end{equation}
with $\sigma_\theta^2$ given by \eqref{eq: theta mean and var}. Since $Y_i = \theta_i$ for the control units, the empirical variance of the control outcomes $\{Y_i:X_i = 0\}$ is consistent for $\sigma_\theta^2$, allowing plug-in estimation of \eqref{eq: var ATT}.

In the presence of interference, $\theta_i$ need not equal $Y_i$ for the control units. In such settings we can estimate $\tau_1$. As shown by \eqref{eq: tau error}, the estimation error $\hat{\tau}_1 - \tau_1$ again equals $\bar{\theta}_1 - \bar{\theta}_0$, and hence its variance is the same as \eqref{eq: var ATT}. 
However, $\sigma_\theta^2$ can no longer be estimated from the control outcomes, and must be bounded by 1/4 if no further assumptions are made, resulting in a prediction interval that is wider than that of the $\textup{SATT}$ by a factor of $(2\sigma_{\theta})^{-1}$. 

In some cases, it may be reasonable to reduce this factor by bounding $\sigma_\theta^2 = \bar{\theta}(1-\bar{\theta})$ more tightly. For example, in the supplement we consider a vaccination trial in which the disease rate ranged from 0.002 to 0.005 in the preceding years. In such a case, it might reasonable to bound $\bar{\theta}$ by a safety factor of the historical values, as opposed to allowing $\bar{\theta} = 0.5$, which would be implausibly high for this setting.

\paragraph{EATE}
The expected average treatment effect (EATE), which is also termed the direct effect, is equal to
\begin{equation}
 \textup{EATE} = \frac{1}{N} \sum_{i=1}^N \left(\mathbb{E}[Y_i| X_i = 1] - \mathbb{E}[Y_i | X_i=0]\right).
\end{equation}
In settings where the units are independently assigned to treatment or control, the EATE is the expected effect of intervening on a single unit, where the effect is measured only on the outcome of that unit, and the expectation is over the randomly assigned treatment of the remaining units. Similar to $\tau_1$, the EATE does not capture total effects -- a treatment assignment might have positive direct effects but negative global effects, so that all units are made worse off.

Unlike $\tau_1$, the EATE is a parameter and not a random variable. As a result, its implications are not limited to the particular treatment assignment that was observed. However, at least nominally the EATE considers only interventions on a single unit. Thus for interventions involving many units the EATE may have more limited relevance.

The EATE and $\tau_1$ have different sources of estimation error, which are not ``nested'' in any sense. Unlike $\tau_1$, the estimation error for the EATE depends on whether $\hat{\tau}_1$ concentrates to its expectation, and requires additional assumptions to bound.\footnote{Under reasonable designs (such as that of Proposition \ref{th: CLT tau}) it holds that $\hat{\tau}_1$ is consistent for $\tau_1$. For such settings, consistency of $\hat{\tau}_1$ for the EATE will also imply convergence of $\tau_1$ to the EATE.} As a result, the interval width for the EATE will depend on the strength of the assumption that is made. Under strong assumptions, the interval width for the EATE may approach those attainable under an assumption of no interference; under very weak assumptions, consistent estimation of the EATE may not be possible.

\paragraph{Examples of Structural Models}

We compare the EATE and $\tau_1$ for two stylized examples of structural models. For simplicity, we assume continuous-valued outcomes and i.i.d. Bernoulli treatment randomization, and let $\theta$ denote the counterfactual outcomes under full control.

\begin{enumerate}

\item Given an observed social network, let $Z_i$ denote unit $i$'s number of neighbors assigned to the treatment group, and consider the model
\[ Y_i = \alpha_{0i} + \alpha_{1i} X_i + \alpha_{2i} Z_i,\]
in which case $\theta_i$ and $A_i$ are given by
\begin{align*}
\theta_i & = \alpha_{0i} & A_i = \alpha_{1i}X_i + \alpha_{2i} Z_i, 
\end{align*}
and $\tau_1$ equals
\[ \tau_1 = \frac{1}{N_1} \sum_{i:X_i=1} \alpha_{1i} + \left(\frac{1}{N_1} \sum_{i:X_i=1} \alpha_{2i}Z_i - \frac{1}{N_0} \sum_{i:X_i=0} \alpha_{2i}Z_i \right).\]
Thus $\tau_1$ for this model equals the average of direct effects for the treated units, plus the difference in spillover effects for the treated and control under the observed assignment. The EATE for this setting is the average of the direct effects for all units, as given by
\begin{equation}\label{eq: EATE example}
 \textup{EATE} = \frac{1}{N} \sum_{i=1}^N \alpha_{1i}
\end{equation}
For large $N$, it can be seen that $\tau_1$ converges to the EATE.


\item For the model of \eqref{eq: high influence unit} in which all units are affected by the treatment of unit 1, it can be seen that $\theta_i = h_i(0)$. As a result, $\tau_1$ can be seen to equal
\[ \tau_1 = \begin{cases} \displaystyle \sum_{i:X_i=1} \frac{g_i(1) - h_i(0)}{N_1} - \sum_{i:X_i=0} \frac{g_i(0) - h_i(0)}{N_0} & \textup{ if $X_1=1$} \\
\displaystyle \sum_{i:X_i=1} \frac{h_i(1) - h_i(0)}{N_1} & \textup{ if $X_1=0$,} \end{cases} \]
the difference in effects under the observed treatment assignment.
This does not converge to the EATE, which is given by 
\begin{align*}
 \textup{EATE} & = \frac{g_1(1) - h_1(0)}{N} + p \sum_{i=2}^N \frac{g_i(1) - g_i(0)}{N} \\
 & \qquad + (1-p) \sum_{i=2}^N \frac{h_i(1) - h_i(0)}{N},
 \end{align*}
where $p=P(X_1=1)$.

\end{enumerate}

\section{General Case} \label{sec: general case}

We generalize $\tau_1$ to allow the estimand to be a linear function of $A$, 
\begin{equation}\label{eq: general estimand}
\text{estimand} = \sum_{i=1}^N w_i(X) A_i,
\end{equation}
where our prediction intervals for the estimand will be smallest when the weight vector $w\equiv w(X)$ is approximately zero-mean in each entry. To implement this practical requirement when the treatment or number of treated neighbors is non-identically distributed, we can use matching, weighting, or regression adjustments to draw approximately unbiased comparisons between units.

\paragraph{Preliminaries} 

To measure each unit's indirect exposure to the treatment of others, let $G \in \{0,1\}^{N \times N}$ denote the adjacency matrix of an observed network, let $Z = (Z_1,\ldots,Z_N)$ denote the number of treated neighbors for each unit,
\begin{equation}\label{eq: Z}
Z_i = \sum_{j: G_{ij} = 1} X_j,
\end{equation}
and $W = (W_1,\ldots,W_N)$ the thresholded exposures at a fixed level $z_{\textup{min}}$:
\begin{equation}\label{eq: W}
 W_i = 1\{Z_i \geq z_{\textup{min}}\}.
\end{equation}
We will assume that the units can be partitioned into $K$ classes $\Pi_1,\ldots,\Pi_K$,  where $n_k = |\Pi_k|$ denotes the size of class $k$, and $k(i)$ equals the class of unit $i$. Our prediction intervals will have valid coverage even if the classes are chosen arbitrarily; however, in practice this is not recommended as the intervals may become very wide. Ideally, the classes should approximate propensity classes, so that for two units $i$ and $j$ belonging to the same class, $(X_i, Z_i)$ and $(X_j, Z_j)$ will be approximately identically distributed. For example, if $X$ is assigned by sampling without replacement, then the distribution of $Z_i$ varies only with $i$'s number of neighbors in the network $G$. In this case, grouping the units by their number of neighbors results in identically distributed exposures in each class.


\paragraph{Regression}

Let $\beta \in \mathbb{R}^J$ denote a general regression specification, solving the least squares problem
    \[ \min_{\beta} \sum_{i=1}^N \left(A_i - v_i(X)^T \beta\right)^2,\]
    for some choice of regressors $v_i(X) \in \mathbb{R}^J$ for $i \in [N]$. Then $\beta$ is given by
    \begin{equation} \label{eq: general regression beta}
     \beta = \left(\sum_{i=1}^N v_i(X) v_i(X)^T\right)^{-1} \sum_{i=1}^N v_i(X)A_i,
     \end{equation}
    and the $\ell$th entry of $\beta$ can be written in the form \eqref{eq: general estimand} for $w$ given by
    \begin{align} \label{eq: general regression w}
     w_i & = e_\ell^T \left(\sum_{i=1}^N v_i(X) v_i(X)^T\right)^{-1} v_i(X),
     \end{align}
     where $e_\ell \in \mathbb{R}^J$ is the $\ell$th column of the $J \times J$ identity matrix. 

To give an example, let $\beta^\tadj$ denote the partial regression coefficient that arises when regressing the attributable effect $A$ on the exposures $Z$ and the set of $K$ indicator variables representing the membership of each unit in the propensity classes $\Pi_1,\ldots,\Pi_K$:
\begin{equation}\label{eq: beta adj}
\beta^{\text{adj}} = \left(\sum_{i=1}^N (Z_i - \zeta_{k(i)})^2\right)^{-1} \sum_{i=1}^N (Z_i -\zeta_{k(i)})A_i,
\end{equation}
where $\zeta_k$ is the average exposure for the $k$th propensity class
\[ \zeta_k = \frac{1}{n_k} \sum_{i \in \Pi_k} Z_i\]
To interpret $\beta^{\tadj}$, we may say that for two individuals in the same propensity class, a unit difference in the level of exposure is associated with a $\beta^\tadj$ difference in their individual-level attributable effects. 

As a second example, we might regress on a set of indicator variables for the possible values of $Z$, plus the propensity classes. This results in the least squares problem
\begin{equation}\label{eq: effect curve}
 \min_{\gamma \in \mathbb{R}^{d_{\max}}, \nu \in \mathbb{R}^K} \sum_{i=1}^N \left(\gamma_{Z_i} + \nu_{k(i)} - A_i\right)^2,
\end{equation}
where $d_{\max}$ denotes the maximum degree of the network $G$. Here $\gamma$ corresponds to the levels of $Z$, and $\nu$ corresponds to the propensity classes. We may then say that $\gamma_z - \gamma_0$ predicts the difference in attributable effects for a unit with exposure level $z$ compared to one in the same propensity class who had zero exposure.

\paragraph{Weighting and Matching}

Given thresholded exposures $W$ and propensity classes $\Pi_1,\ldots,\Pi_K$, let $n_{k1}$ and $n_{k0}$ denote the number of units in $\Pi_k$ for which $W_i=1$ and $W_i=0$, respectively. Let $\tau^{\tweight}$ denote the weighted difference
\begin{equation}\label{eq: tau weighted}
 \tau^{\tweight} = \sum_{k=1}^K \frac{n_k}{N} \sum_{i \in \Pi_k}\left(  \frac{W_iA_i}{n_{k1}} - \frac{(1-W_i)A_i}{n_{k0}}\right),
\end{equation}
which can be interpreted as the difference in attributable effects within each propensity class, averaged together after weighting by class size. It can be seen that $\tau^{\tweight}$ can be written in the form $w^TA$ for $w$ given by
  \begin{equation} \label{eq: weighted w}
     w_i = \frac{n_{k(i)}}{N} \left(\frac{W_i}{n_{k(i)1}} - \frac{1-W_i}{n_{k(i)0}}\right) 
  \end{equation}

Alternatively, we can construct a matched subset of units by randomly pairing units within the same propensity class (where one unit in each pair has $W_i=1$ and the other has $W_i=0$), without replacement until no more pairs can be formed. Let $\mathcal{M}$ denote the set of all units included in the matching, and let $m = |\mathcal{M}|/2$ denote the number of matched pairs. Let $\tau^\tmatch$ denote the difference in attributable treatment effects between the matched units, 
\begin{equation} \label{eq: tau matched}
\tau^\tmatch = \frac{1}{m} \left(\sum_{i \in \mathcal{M}:W_i=1} A_i - \sum_{i \in \mathcal{M}:W_i=0} A_i\right).
\end{equation}
The value $\tau^\tmatch$ is analogous to $\tau$, measuring the difference in attributable effects for the units in the matching. It can be seen that $\tau^\tmatch$ equals $w^TA$ for $w$ given by
  \begin{equation} \label{eq: matched w}
     w_i = \begin{cases} \frac{1}{m} & \text{ if } i \in \mathcal{M}, W_i = 1 \\
     -\frac{1}{m} & \text{ if } i \in \mathcal{M}, W_i = 0 \\
     0 & \text{ otherwise} \end{cases}
  \end{equation}

As units are randomly paired within the same propensity class, the quantity $\tau^\tmatch$ is random given $X$. If this is undesireable, the expectation of $\tau^\tmatch$ may be used instead. This quantity closely resembles $\tau^\tweight$, and is given by
\begin{equation}\label{eq: E tau matched}
 \E[\tau^\tmatch|X] = \sum_{k=1}^K \frac{m_k}{m} \sum_{i \in \Pi_k} \left(\frac{W_iA_i}{n_{k1}} - \frac{(1-W_i)A_i}{n_{k0}}\right),
\end{equation}
where $m_k = \min(n_{k1}, n_{k0})$ equals the number of matched pairs formed in the $k$th strata, with $m = \sum_k m_k$.

\paragraph{Number of Affected Units}

Analogous to \eqref{eq: tau number of affected units}, the estimand $w^TA$ implies the following lower bound on the number of units affected,
\begin{equation}\label{eq: w number of affected units}
\sum_{i:w_i \neq 0} 1\{Y_i \neq \theta_i\} \geq \frac{|w^TA|}{\max_{i \in [N]}|w_i|},
\end{equation}
as proven in the supplement.

\subsection{Estimation} In principle, estimation for the general estimand $w^TA$ proceeds similarly to that of $\tau_1$, by defining the point estimate to equal $w^T Y$ and then bounding the distribution of the estimation error.

This point estimate will usually take intuitive form. For example, if the estimand $w^TA$ is a regression on the attributable effects, then $w^TY$ is simply the regression applied to the observed outcomes. Likewise, if $w^TA$ is a weighted or matching-based comparison, then $w^TY$ is simply that comparison applied to the observed outcomes. 

This point estimate need not be unbiased, in the sense that the estimation error may not be zero-mean; however, the expectation of the error (i.e., bias) can be bounded by 
\begin{equation}\label{eq: bias bound}
   \min_{\theta \in \{0,1\}^N} \mathbb{E}w^T\theta\ \leq \mathbb{E}\left[w^TY - w^TA\right] \leq \ \max_{\theta \in \{0,1\}^N} \mathbb{E}w^T\theta
\end{equation}
or equivalently by 
\begin{equation}\label{eq: bias bound 2}
   \sum_{i=1}^N \min(\mathbb{E}w_i, 0) \ \leq \mathbb{E}\left[w^TY - w^TA\right] \leq \ \sum_{i=1}^N \max(\mathbb{E}w_i, 0).
\end{equation}

To find a prediction interval for $w^T A$, we require a central limit theorem to hold for $w^T\theta$, so that
\[ \frac{w^T\theta - \mathbb{E}w^T\theta}{\sqrt{\operatorname{Var}(w^T\theta)}} \rightarrow N(0,1),\]
implying a normal-based interval for $w^T\theta$ given by
\[ \mathbb{E}w^T\theta \pm z_{1-\alpha} \sqrt{\operatorname{Var} [w^T\theta}].\]
As $\theta$ is unobserved, the boundaries of this interval are unknown, but we can find an outer bound by maximizing and minimizing over the unknown values of $\theta$, resulting in the upper and lower bounds $U$ and $L$ given by
\begin{align} 
\label{eq: U} U & = \max_{\theta \in \{0,1\}^N}\ \mathbb{E}w^T\theta + z_{1-{\alpha}} \sqrt{\operatorname{Var}[w^T\theta]} \\
\label{eq: L} L & = \min_{\theta \in \{0,1\}^N}\ \mathbb{E}w^T\theta - z_{1-{\alpha}} \sqrt{\operatorname{Var}[w^T\theta]},
\end{align}
and a $(1-2\alpha)$ level prediction interval for the estimand is given by $w^TY - [U,\, L]$.\footnote{If $-U \leq -w^T\theta \leq -L$ holds with probability $1-\alpha$, then adding $w^TY$ implies $w^TY -U \leq w^TA \leq w^TY -L$ holds with the same probability.} 

Appendix \ref{sec: asymptotics} gives results showing consistency and interval coverage for regression-based estimands.

\subsection{Computation} \label{sec: computation}

As $\operatorname{Var}(w^T\theta)$ equals $\theta^T Q \theta$ for $Q = \mathbb{E}ww^T - \mathbb{E}w\mathbb{E}w^T$, solving for $U$ given by \eqref{eq: U} requires solving the optimization problem
\begin{equation}\label{eq: optimization}
 \max_{\theta \in \{0,1\}^N} \mathbb{E}w^T\theta + z_{1-\alpha} \sqrt{\theta^T Q \theta},
\end{equation}
To ensure coverage of the prediction intervals, a local optimum to \eqref{eq: optimization} is not sufficient -- either the global solution or an upper bound must be found. As $Q$ is positive semi-definite, this requires maximization of a convex objective, which is computationally difficult even if the integer constraint on $\theta$ is relaxed to the convex set $[0,1]^N$. 

To solve or upper bound \eqref{eq: optimization}, we proceed in two steps. First, we find a diagonal matrix $D$ such that $Q - D$ is negative semidefinite. This can be done by solving the semidefinite program 
\begin{align*} \label{eq: sdp}
  \min_{D \in \mathbb{R}^{n\times n}} \sum_{i=1}^n D_{ii} \qquad \textup{subject to } D \textup{ diagonal, $D \geq 0,$ and } Q-D \preceq 0 
\end{align*}
where $\geq 0$ denotes nonnegativity and $\preceq 0$ denotes negative semidefinite. Then, we can rewrite \eqref{eq: optimization} as  
\begin{equation}\label{eq: optimization 2}
 \max_{\theta \in \{0,1\}^N} \mathbb{E}w^T\theta + z_{1-\alpha} \sqrt{\theta^T (Q-D) \theta + \sum_{i} D_{ii} \theta_i}
\end{equation}
It can be seen that the objective of \eqref{eq: optimization 2} is concave, but equals the convex objective \eqref{eq: optimization} for all $\theta \in \{0,1\}^N$, since $x = x^2$ if $x$ is binary valued. \eqref{eq: optimization 2} can be solved or bounded by Gurobi using branch-and-bound methods. For faster computation, an upper bound can be found by relaxing the integrality constraint of \eqref{eq: optimization 2} to $\theta \in [0,1]^N$, resulting in a concave maximization problem whose global optimum can be efficiently found. Solving for $L$ given by \eqref{eq: L} proceeds analogously.



\section{Examples} \label{sec: examples}

\subsection{Social Networks and the Decision to Insure} \label{sec: cai}

In the experiment described in \cite{cai2015social}, rural farmers in China were randomly assigned to information sessions where they would be given the opportunity to purchase weather insurance.\footnote{data available at \texttt{https://www.aeaweb.org/articles?id=10.1257/app.20130442}} The farmers were instructed to list 5 close friends with whom they specifically discussed rice production or financial issues. First round sessions were held three days before second round sessions, so that first round attendees could have informal conversations with their second round friends, in which they might share their opinions about the insurance product. 


The goal of the experiment was to broadly demonstrate the importance of information sharing, by measuring its effects in a randomized setting. One of the conclusions of \cite{cai2015social} was that the decision to purchase insurance was affected not only by a farmer's own treatment assignment, but also by that of their friends; specifically, farmers assigned to a second round session were more likely to purchase insurance if more of their listed friends in the first round were assigned to a high-information session. 

Our analysis will allow that units may have been affected not only by the treatment of their friends, but other units as well. For example, units might be affected by the treatment of friends-of-friends, as information might travel further than ``1 hop'' on the friendship network. Additionally, if questions were allowed during the information sessions, then a question asked by one unit might affect the opinions of all other units in their session.



\paragraph{Setup}

We let $M$ and $N$ denote the number of units in the first and second rounds, and define $X$, $Y$, $Z$, and $\theta$ as follows. For the first round units, let $X \in \{0,1\}^M$ indicate whether each unit was assigned to a high or low information session. For the second round units, let $Y \in \{0,1\}^N$ indicate whether or not they purchased insurance; let $Z$ denote their number of first round friends assigned to a high information session; and let $\theta \in \{0,1\}^N$ denote their outcomes that would have occurred under a counterfactual in which the first round sessions were not held. For example, if $Y_i = 1$ and $\theta_i =0$, then unit $i$ purchased insurance, but would not have done so if the first round sessions were not held. Thus $A = Y - \theta$ equals the effect of the first round sessions on the purchasing decisions of the second round.


We will assume that $X$ is randomized by simple random sampling within each village, and let the classes $\{\Pi_k\}_{k=1}^K$ divide the second round units by village and number of first round friends. In fact, \cite{cai2015social} reports that treatments were stratified by family size and amount of land used for rice farming; however, further details are not included, and additionally these variables are missing for a significant fraction of the first round units. For this reason, our results for this example should be viewed primarily as a illustration of the proposed method.

\paragraph{Results}

Table \ref{table: cai results} gives estimates for various parameters described in Section \ref{sec: general case}. The point estimate for $\beta^\tadj$ suggests that for two farmers in the same propensity class, a unit difference in their number of high information first round friends was associated with a 0.08 difference in the effects of the first round sessions. However, the interval is weakly non-significant (95\% PI: [-0.01, 0.16]). 

For $\gamma$ defined by \eqref{eq: effect curve}, the contrasts $\gamma_z - \gamma_0$ suggest an effect threshold at 2 high information friends. Units with 1 high information friend experienced effects that were indistinguishable from those with zero such friends. In constrast, units with 2 high information friends were affected more positively than those with zero such friends, with 22 more purchases caused (or fewer purchases prevented) by the first round sessions per 100 second round individuals (95\% PI: [2, 41]). Contrasts $\gamma_z - \gamma_0$ for $z\geq 3$ were non-significant due to small numbers of units with $Z\geq 3$. Similar differences in effect were estimated by the weighted and matching-based estimands $\tau^\tweight$, $\tau^\tmatch$, and $E[\tau^\tmatch|X]$, using the thresholded exposures $W_i = 1\{Z_i \geq 2\}$. 

The interval widths are large, and allow that the effects may have been small. For example, plugging the lower bound of the PI for $\tau^\tmatch$ into \eqref{eq: w number of affected units} only implies that at least 6 (or 2\%) of the matched units might have been affected.

The interpretation of our estimates might be the following: under the observed treatment assignment, the second round units were affected by the first round sessions, with attributable effects that varied systematically with the information content given to their friends who attended first round sessions, and in particular with whether at least two such friends received high information content. This may be taken as evidence that information sharing affected the decisions of the farmers. However, without further assumptions, we cannot estimate whether the second round units were affected positively or negatively on average, nor whether the effects would be sustained if the treatment assignment was modified or redrawn.






\begin{table}[t!]
\begin{center}
\begin{tabular}{l|r |r | r | r }
   & Point Est. & Bias & 95\% PI \\
  \hline
  $\beta^\tadj$ & 8\% & $\pm$ 0.4\% & [-1\%, 17\%]  \\
  $\gamma_1 - \gamma_0$ & 2\% & $\pm$ 1\%& [-11\%, 14\%]  \\
  $\gamma_2 - \gamma_0$ & {\bf 21\%} & $\pm$ 1\%& [2\%, 40\%] \\
  $\gamma_3 - \gamma_0$ & 16\% & $\pm$ 2\% & [-17\%, 49\%]  \\
  $\gamma_4 - \gamma_0$ & 69\% &  $\pm$ 2\% & [-60\%, 195\%]  \\
  $\tau^\tweight$ & {\bf 19\%} & $\pm$ 2\%& [2\%, 36\%] \\
  $\tau^\tmatch$ & {\bf 20\%} & $\pm$ 1\% & [3\%, 37\%] \\
  $\operatorname{E}[\tau^\tmatch|X]$ & {\bf 19\%} & $\pm$ 1\% & [3\%, 35\%]  
\end{tabular}
\end{center}
\caption{Estimation Results for Insurance Experiment. (Bias denotes bounds given by \eqref{eq: bias bound 2})}
\label{table: cai results}
\end{table}

\subsection{Simulated Vaccine Trial} \label{sec: cholera sim}

The \texttt{vaccinesim} dataset simulates a vaccine trial using models that were partially fit to an actual cholera vaccine trial, as described in \cite{perez2014assessing}.\footnote{available in the R package \texttt{inferference}.} In both simulation and actual trial, units were divided into small contiguous neighborhoods ($\leq 20$ individuals). However, \cite{loh2018randomization} cite evidence of cross-neighborhood interference in the actual cholera trial, through shared bodies of water and kinship ties such as marriage.

 \paragraph{Setup} Let $N$ denote the number of individuals who participated in the trial. For $i \in [N]$, let $X_i$ denote whether unit $i$ received the vaccine or placebo; let $Y_i=1$ denote that the unit had cholera; let $\theta_i$ denote their outcome under the counterfactual in which all units had received the placebo; and let $V_i$ denote the fraction of $i$'s neighbors (including those who were non-participants) who were assigned to vaccination. Participants were assigned independently with probability 2/3 to receive the vaccine, otherwise receiving the placebo; non-participants had no opportunity to receive the vaccine. Supplement \ref{sec: data tables} contains additional simulation details.

\paragraph{Results} We estimate $\tau_1$ given by \eqref{eq: tau1}, and $\beta_1$, $\beta_2$, and $\beta_3$ in the following regression of effects against direct treatment, fraction of treated neighbors (including non-participants), an interaction term, and a control term,
\[ A_i \sim \beta_0 + \beta_1 X_i + \beta_2 V_i + \beta_3 (X_i\cdot V_i) + \beta_4 \, \mathbb{E}[V_i],\]
where the expectation of $V_i$ depends on the number of participants and non-participants in unit $i$'s neighborhood, and is used to control for the fact that $V_i$ is non-identically distributed.

Table \ref{table: vaccine sim estimates} shows results. The point estimate for $\tau_1$ suggests that vaccinated units were relatively more protected than placebo, with 12 additional cases prevented (or fewer cases caused) per 100 individuals under the observed treatment assignment. By plugging the lower bound of the prediction interval for $\tau_1$ into \eqref{eq: tau number of affected units}, we may say with 95\% confidence level that at least 101 participants (5.6\% of $N$) were affected by the vaccine trial. The point estimates for $\beta_1$ and $\beta_3$ suggest that the relative attributable effects between vaccine and placebo varied with neighborhood vaccination rates, with 
\[ 0.22 - 0.29 \cdot \text{(\% of treated neighbors)}\]
additional cases prevented per capita for vaccinated units compared to placebo. Thus in neighborhoods with low vaccination rates, the difference in attributable effects for vaccinated and placebo units may have been significantly larger than the overall difference given by $\tau_1$. The point estimate for $\beta_2$ indicates that placebo units were relatively more protected in neighborhoods with high vaccination rates; a difference of 1 additional vaccination per 100 individuals was associated with 0.6 additional cases prevented (or fewer cases caused) per 100 placebo-receiving individuals. Estimates for $\beta_1$, $\beta_2$ and $\beta_3$ had high uncertainty, and 90\% prediction intervals for these estimands are shown.

\begin{table}[t!]
\begin{center}
\begin{tabular}{l|l |l | l   l}
  & Point Est.  & Bias & PI & Coverage \\
  \hline
  $\tau_1$ & {\bf -0.12} & 0 & [-0.17, -0.07] & 95\%  \\
  $\beta_1$ &  {\bf -0.21} & $\pm 0.005$ & [-0.38, -0.05] & 90\%  \\
  $\beta_2$ &  {\bf -0.59} & $\pm 0.02$ &  [-1.16, -0.02] &  90\% \\
  $\beta_3$ &  0.29 & $\pm 0.01$ & [-0.03, 0.61] & 90\%
\end{tabular}
\end{center}
\caption{Estimates for \texttt{vaccinesim} data}
\label{table: vaccine sim estimates}
\end{table}

The practical interpretation of our point estimates might be the following. Under the observed treatment assignment, the relative attributable effects were consistent with those of a working vaccine that imbued herd resistance: units were relatively more protected when receiving the vaccine, and placebo units were relatively more protected when neighbors were vaccinated. As in the previous example, without further assumptions we cannot estimate the overall attributable effect of the vaccination trial, nor whether the effects would be sustained if the treatment assignment were modified or redrawn.

\section{Conclusions}

We have presented an approach for partial estimation of unit-level attributable effects in randomized experiments, with no assumptions regarding interference. For such settings, the approach goes beyond hypothesis testing, by answering whether the units were affected systematically as a function of their direct and indirect exposure to treatment, and by implying a lower bound on the number of affected units. 


\paragraph{Supplemental Materials} The supplement contains additional data examples; proof of \eqref{eq: tau number of affected units} and \eqref{eq: w number of affected units}; asymptotic results regarding consistency and interval coverage for regression estimands (including Proposition \ref{th: CLT tau}); and data summaries. 

\if1\blind
{
\paragraph{Acknowledgements} The author wishes to acknowledge helpful discussions with Brian Kovak, Dan Nagin, Beka Steorts, Eric Tchetgen Tchetgen, and Stefan Wager, as well as detailed and insightful feedback from the editors and reviewers
} \fi

\paragraph{Disclosures} The author reports there are no competing interests to declare.

 \bibliographystyle{apalike}
 \bibliography{bibfile}

 \pagebreak 
 
\appendix

\renewcommand{\theequation}{S.\arabic{equation}}
 \renewcommand{\thetable}{S\arabic{table}}%
 \setcounter{equation}{0}

\section*{Supplemental Material for ``Randomization-only Inference in Experiments with Interference''}

\section{Additional Data Analysis}

\paragraph{Coverage of Confidence Intervals in Simulated Vaccine Trial}  Using the simulation parameters given in \cite{perez2014assessing}, we may generate the unknown counterfactual outcomes $\theta$ and simulate $X$ to measure the coverage rates of our confidence intervals. For the model-generated values of $\theta$, in 10K simulations the interval for $\tau_1$ was well-calibrated, while the intervals for $\beta_1$, $\beta_2$, and $\beta_3$ achieved coverage but were as much as 2.2 times wider than necessary. However, for other values of $\theta$, found by solving \eqref{eq: U}, the coverage of the 90\% interval was as low as 91\%. As no assumptions are placed on $\theta$, this suggests that the interval widths were necessary in order to have coverage for all possible values of $\theta$.

\paragraph{Real Vaccine Trial (Aggregate Outcomes)} \label{sec: cholera real}

The paper \citeappendix{ali2005herd} analyzes the actual results of an actual cholera vaccine trial, in order to investigate the possibility of indirect (herd) protection under high levels of vaccination. A table of aggregate outcomes for the actual cholera vaccine trial is given, which we aggregate further in Table \ref{table: vaccine real data} into two groups of roughly equal size, by dividing the units according to whether their neighborhood vaccination rate was  $\leq 40\%$ or $> 40\%$. We consider the regression 
\[ A_i \sim \beta_1 X_i + \beta_2 (X_i \cdot i \in \textup{group 1}) + \textup{controls},\]
where the control terms are the membership indicators for groups 1 and 2. 

If no further assumption are placed on $\theta$,then we allow for the possibility that in the absence of the vaccinations, the cholera rate could have been hundreds of times larger than the observed rate of 2.8 cases per thousand. As this may be unrealistic, we will consider an additional assumption based on past cholera rates. The trial covered the Matlab region of Bangladesh in 1985, and cholera rates for this region are given in \citeappendix[Fig. 1]{zaman2020can}. We will assume that under the counterfactual $\theta$ in which all units receive the placebo, the cholera rate for the participants would be upper bounded by 7 cases per thousand, 
\begin{equation}\label{eq: theta assumption 1}
 \frac{1}{N} \sum_{i=1}^N \theta_i \leq 0.007,
 \end{equation}
which is substantially higher than the highest observed rate of 5 per thousand in the ten years before the trial. As the participants ($\approx 30\%$ of the population) are not a random sample, some caution is required; however, \eqref{eq: theta assumption 1} may be easier to consider than a bound on interference between neighborhoods. 

The coefficients $\beta_1$ and $\beta_2$ characterize the relative effects of receiving the vaccine versus placebo, in each group. Table \ref{table: vaccine real data estimates} shows the results. The estimates suggest that vaccinated units were relatively more protected that placebo ones, by a difference of
\[ 1.3 + 2.2 \cdot 1\{\text{unit is in group 1}\} \]
additional cases prevented per thousand. Prediction intervals (weakly) exclude zero for the interaction term $\beta_2$, indicating that the relative effects were different for the two groups, and also for $\beta_1 + \beta_2$, which gives the relative effects for group 1. These results suggest that the overall difference in outcomes between treatment and control, which equaled 2.4 cases per thousand, may under-represent the effects of the vaccine, as larger relative effects occurred when vaccination rates were low. As in previous examples, without further assumptions we cannot estimate the overall attributable effect of the vaccination trial, nor whether the effects would be sustained if the treatment assignment were modified or redrawn.

\begin{table}[t!]
\begin{center}
\begin{tabular}{l|l|l |l   }
   & Vaccination & \multicolumn{2}{c}{Cholera Rate (per 1000)} \\
  Group & Rate & Treated & Control \\
  \hline
  1 & $\leq 40\%$ & 2.1 (54/25K) & 5.6 (72/13K) \\  
  2 & $> 40\%$ & 1.7 (42/25K) & 3.0 (36/12K) \\
  \hline
  Total &  &  1.94 (96/49K) & 4.4 (108/25K)
\end{tabular}
\end{center}
\caption{Cholera cases in treated and control units grouped by neighborhood vaccination rates. (Taken from \cite{hudgens2008toward})}
\label{table: vaccine real data}
\end{table}

\begin{table}[t!]
\begin{center}
\begin{tabular}{l|l |l | l  }
  & Point Estimate  & Bias & 90\% PI  \\
  \hline
  $\beta_1$ & -1.3 & 0 & [-3.5, 0.9] \\
  $\beta_2$ & -2.2 & 0 & [-4.4, 0.0] \\
  $\beta_1 + \beta_2$ & -3.5 & 0 & [-5.6,\, -1.4]
\end{tabular}
\end{center}
\caption{Estimates (in cases per thousand) for aggregate-level cholera data}
\label{table: vaccine real data estimates}
\end{table}

%

\paragraph{Support for Conflict Reduction in Schools} \label{sec: aronow}

The paper \citeappendix{paluck2016changing} describes an experiment\footnote{data available at \texttt{https://www.icpsr.umich.edu/web/civicleads/studies/37070}} in which public middle schools were randomized to receive a treatment designed to encourage anti-conflict norms and behavior in students. Within each treated schools, a small subset of eligible students (between 40-64) were stratified into 4 groups and then randomized between treatment and control. Treated students were invited to participate in bi-monthly meetings in which they designed and enacted strategies to reduce schoolwide conflict. Outcomes for all students in each school included the self-reported wearing of an orange wristband representing support for anti-conflict norms.  Network information (``spent time with'') was recorded for all students. Interference seems likely to extend beyond 1 hop on this network; as each invited unit could potentially influence school-wide strategy by participating in the bi-monthly meetings, the treatment of a single unit might affect the outcomes of their entire school.  Estimation of effects on the wristband outcome has been used in \cite{aronow2012general} and \cite{leung2019causal} to illustrate their proposed methods, as we do here.

\begin{table}[t!]
\begin{center}
\begin{tabular}{l|l |l | l  }
    Definition of $W_i=0$ & Point Estimate  & Bias & 95\% PI  \\
  \hline
  $i$ is treated, with no treated friend & -0.8 & $\pm 0.04$ & [-0.28, 0.12]\\
  $i$ is control, with treated friend & -0.13 & $\pm 0.03$ & [-0.27, 0.00] \\
  $i$ is control, with no treated friend & -0.24 & $\pm 0.03$ & [-0.48, 0.00] 
\end{tabular}
\end{center}
\caption{Estimates of $\mathbb{E}[\tau^{\tmatch}]$, where $W_i=1$ if unit $i$ is treated and has a treated friend}
\label{table: aronow}
\end{table}

Our setup is as follows: Let $X_i$ denote if student $i$ was invited to participate in the anti-conflict meetings (which can only equal 1 for eligible students), and $Y_i$ their outcome of whether they self-reported wearing the wristband representing support for anti-conflict norms. Breaking from the previous examples, we define the counterfactual $\theta_i$ to denote their outcome if all eligible units were invited to participate in the meetings (``full treatment''). With $\theta$ so defined, the difference $Y - \theta$ is interpretable as the effect of not inviting the control units to participate. 

To estimate the relative effects of direct treatment, we estimated $\mathbb{E}[\tau^\tmatch]$, where the matching was constructed between the treated and control units in each school. This resulted in a point estimate of 0.15, implying that not inviting the control units to participate in the meetings affected the control units more adversely than the treated units, with 15 fewer cases of self-reported wristband wearing caused by the meetings per 100 units (95\% PI: [7, 24] per 100). 

The units were then divided into 4 groups according to their direct treatment and whether or not they had at least one treated friend. Between each pair of groups a matched sample was constructed, where matched units had the same school and number of eligible friends. Table \ref{table: aronow}  shows estimates of $\mathbb{E}[\tau^\tmatch]$ for a subset of pairings. Similar results were found using weighting or matching. Weakly non-significant differences in effects was found between some pairs of groups; as a result of not inviting the control units to participate, units in control (regardless of whether they had a treated friend) were affected more adversely than those who were treated and had at least one treated friend.

As in previous examples, without further assumptions we cannot estimate the overall attributable effect of the experiment, nor whether the observed effects would be sustained if the treatment assignment were modified or redrawn.

\paragraph{Facebook voting experiment} \label{sec: facebook}

The unpublished manuscript \cite[Sec 6.3]{choi2018manuscript} estimates $\tau_1$ for a voting experiment conducted by Facebook during the 2010 US congressional elections, originally published in \citeappendix{bond201261}. Their results imply that treated units were more strongly encouraged than control units to report that they had voted, with 2.2 additional reports caused by the experiment per 100 individuals (95\% PI: [2.1, 2.3]). As in previous examples, without further assumptions we cannot estimate the overall attributable effect of the experiment, nor whether the effects would be sustained if the treatment assignment were modified or redrawn.

\section{Asymptotics} \label{sec: asymptotics}

\subsection{Consistency of Regression Estimates} To study consistency of regression estimates, we will assume that the treatment vector $X$ can be written as a function $X(U)$ of a vector $U = (U_1,\ldots,U_N)$ consisting of i.i.d.\! uniform random variables,
\begin{align}
  \label{eq: U design 1} U_i & \stackrel{\textup{iid}}{\sim} \textup{Unif}\, [0,1], &  i \in  [N]  
\end{align}
where $X(U)$ satisfies the following bounded differences condition: if the vectors $U$ and $U'$ differ in only a single entry, then 
\begin{equation} \label{eq: CX}
\sum_{i=1}^N 1\{X_i(U) \neq X_i(U')\} \leq C,  
\end{equation}
for some constant $C$. This accomodates a variety of experiment designs. For example, if $X$ is given by
\begin{align} \label{eq: bernoulli U}
 X_i = \begin{cases} 1 & \textup{ if $U_i \leq p$} \\ 0 & \textup{ else,} \end{cases}
 \end{align}
then each unit is independently assigned to treatment with probability $p$ and \eqref{eq: CX} holds with $C=1$. To assign treatment by sampling $M$ units without replacement, we may define $X$ by
\begin{align} \label{eq: SWOR U}
 X_i = \begin{cases} 1 & \textup{ if $U_i \leq U_{[M]}$} \\ 0 & \textup{ else,} \end{cases}
\end{align}
where $U_{[1]} \leq \cdots \leq U_{[N]}$ are the sorted values of $U_1,\ldots,U_N$, so that the units with the $M$ smallest $U_i$ values are assigned to treatment. Then \eqref{eq: CX} holds for $C=2$. (To extend to stratified sampling, $U_1,\ldots,U_N$ can be sorted within each strata, in which case $C=2$ as well.)

Proposition \ref{th: regression consistency} considers regression estimands $(\beta, \gamma)$ of the form
\begin{equation}\label{eq: regression consistency}
 (\beta, \gamma) = \argmin_{\beta \in \mathbb{R}^d, \gamma \in \mathbb{R}^K} \sum_{i=1}^N \left(A_i - \xi_i(X)^T \beta - \phi_i^T \gamma\right)^2,
 \end{equation}
which includes \eqref{eq: general regression beta}, and gives conditions under which point estimates will be consistent at rate $O_P(N^{-1/2})$. It assumes a sequence of experiments for which the following assumptions hold. Assumption \ref{as: X U} pertains to the experiment design, and requires the treatment assignment $X$ to have limited dependence between unit assignments, as enforced by the bounded differences condition \eqref{eq: CX}. 

\begin{assumption} \label{as: X U}
The treatments $X \equiv X(U)$ satisfies the bounded differences condition \eqref{eq: CX} 
\end{assumption}

Assumption \ref{as: bounded differences xi} requires each basis function $\xi_i(X)$ to depend only on the treatment of unit $i$ and $i$'s neighbors in a network $G$ that has bounded degree.

\begin{assumption} \label{as: bounded differences xi}
For some network $G$ with in-degree and out-degree bounded by $d_{\max}$, each mapping $\xi_i$ satisfies 
\[ \xi_i(X) = \xi_i(X') \textup{ for all } X, X': X_i = X_i' \textup{ and } X_j = X_j' \textup{ for all } j: G_{ij} = 1\]
so that $\xi_i(X)$ depends only on the treatment of unit $i$ and $i$'s neighbors in $G$.
\end{assumption}

Assumption \ref{as: bounded vi} states that $\xi_i(X)$ and $\phi_i$ are bounded in magnitude. 

\begin{assumption} \label{as: bounded vi}
It holds that
\[ \max_i \|\phi_i\|_\infty \leq B \quad \text{ and } \quad \max_i \|\xi_i(X)\|_\infty \leq B \text{ with probability 1} \]
\end{assumption}

Assumption \ref{as: lambda min} requires the expectation of the regressors to be bounded away from collinearity, in that their expected gram matrix should have smallest eigenvalue bounded away from zero. 

\begin{assumption} \label{as: lambda min}
Let $v_i = \left[ \begin{array}{c} \xi_i(X) \\ \phi_i \end{array}\right]$ for $i \in [N]$. The matrix $\bar{M} = \frac{1}{N}\sum_{i=1}^N \mathbb{E}[v_i v_i^T]$ has smallest eigenvalue bounded below by $\lambda_{\min}$
\end{assumption}

Assumption \ref{as: phi} requires the regression specification to correctly adjust for non-identically distributed regressors $\{\xi_i(X)\}_{i=1}^N$, enabling unbiased estimation of $\beta$.  It gives two possible approaches for doing so. The first is that the expectation of $\xi_i(X)$ is used as a control variable. The second is that the units can be divided into propensity classes, where two units belong to the same class if their regressors have the same expected value.

\begin{assumption} \label{as: phi}
The regressors $(\xi_i, \phi_i)_{i=1}^N$ satisfy either of the following two conditions:
\begin{enumerate}[label=A.\arabic*,ref=A.\arabic*]  \addtocounter{enumi}{0} 
  \item \label{as: phi 1} $\phi_i = \mathbb{E}\xi_i(X)$ for $i \in [N]$
  \item \label{as: phi 2} There exists a mapping $\kappa:[N] \rightarrow [K]$ that partitions the $N$ units into $K$ classes, and vectors $\bar{\xi}_1,\ldots,\bar{\xi}_K$ such that for all $i \in [N]$, it holds that 
  \[ \mathbb{E}\xi_i(X) = \bar{\xi}_{\kappa(i)}  \qquad \phi_i = e_{\kappa(i)}, \qquad i \in [N]\]
  where $e_k$ is the $k$th column of the $K \times K$ identity matrix
\end{enumerate}
\end{assumption}

The statement of the proposition is as follows.

\begin{proposition} \label{th: regression consistency}

Assume a sequence of experiments with estimands $(\beta, \gamma)$ given by \eqref{eq: regression consistency} and point estimates $(\hat{\beta}, \hat{\gamma})$ given by
\begin{equation}\label{eq: regression consistency estimator}
 (\hat{\beta}, \hat{\gamma}) = \argmin_{\beta \in \mathbb{R}^d, \gamma \in \mathbb{R}^K} \sum_{i=1}^N \left(Y_i - \xi_i(X)^T \beta - \phi_i^T \gamma \right)^2,
 \end{equation}
where $d$ and $K$ are constant as $N \rightarrow \infty$. Let assumptions \ref{as: X U} - \ref{as: phi} hold, where $C$, $d_{\max},$ and $\lambda_{\min}$ are constant as $N \rightarrow \infty$. Then it holds that $\hat{\beta} - \beta = O_P(N^{-1/2})$.
\end{proposition}

\subsection{Coverage of prediction intervals}

Proposition \ref{th: regression normality} gives conditions under which prediction intervals for regression-based estimands will asymptotically have correct coverage. It assumes a sequence of experiments, and applies a central limit theorem for sums of variables whose dependency graph has bounded degree \citepappendix[Thm 2.2]{rinott1994normal}. 

To apply the central limit theorem of \citepappendix[Thm 2.2]{rinott1994normal}, we require Assumptions \ref{as: bounded differences xi}-\ref{as: lambda min}, and additionally that the treatments are iid Bernoulli randomized:
\begin{assumption}\label{as: bernoulli}
$X$ has distribution
\[ X_i \stackrel{iid}{\sim} \operatorname{Bernoulli}(\rho)\]
for $0 < \rho < 1.$
\end{assumption}
\noindent As mentioned in the discussion of \eqref{eq: CX}, Assumption \ref{as: bernoulli} implies that Assumption \ref{as: X U} holds with $C=1$.





The statement of Proposition \ref{th: regression normality} is as follows. 


\begin{proposition} \label{th: regression normality}
Assume a sequence of experiments with estimands $(\beta,\gamma)$ given by \eqref{eq: regression consistency} and point estimates $(\hat{\beta}, \hat{\gamma})$ given by \eqref{eq: regression consistency estimator}, where $d$ and $K$ are constant as $N \rightarrow \infty$. 

For $i \in [N]$, let $v_i = \left[ \begin{array}{c} \xi_i(X) \\ \phi_i \end{array}\right]$. For any $\ell \in [d]$, let $w$, $\bar{w}$, and $\ddot{w} \in \mathbb{R}^N$ be given by
\begin{align}
\label{eq: w} w_i & = e_\ell^T\left(\sum_{i=1}^N v_i v_i^T\right)^{-1}  v_i \\
\label{eq: bar w} \bar{w}_i & = e_\ell^T \left(\sum_{i=1}^N \mathbb{E}v_i v_i^T\right)^{-1} \mathbb{E}v_i\\
\label{eq: ddot w} \ddot{w}_i & =  e_\ell^T \left(\sum_{j=1}^N \mathbb{E}v_j v_j^T\right)^{-1}  \left( v_i - \sum_{j=1}^N v_j v_j^T \left(\sum_{k=1}^N \mathbb{E}v_k v_k^T\right)^{-1} \mathbb{E}v_i\right)
\end{align}
and let $U$ and $L$ respectively equal
\begin{align}
\label{eq: regression normality U} U &= \max_{\theta \in \{0,1\}^N} \left[ \bar{w}^T \theta + z_{1-\alpha} \sqrt{\operatorname{Var}(\ddot{w}^T \theta)}\right] \\
\label{eq: regression normality L} L &= \min_{\theta \in \{0,1\}^N} \left[ \bar{w}^T \theta - z_{1-\alpha} \sqrt{\operatorname{Var}(\ddot{w}^T \theta)}\right],
\end{align}
and for any $\epsilon > 0$ let $U'$ and $L'$ equal
\begin{align}
\label{eq: U'} U' &= \max_{\theta \in \{0,1\}^N} \left[\bar{w}^T\theta + \frac{\epsilon}{N^{5/9}}\right] \\
\label{eq: L'} L' & = \min_{\theta \in \{0,1\}^N} \left[\bar{w}^T\theta - \frac{\epsilon}{N^{5/9}}\right].
\end{align}

Let Assumptions \ref{as: bounded differences xi}-\ref{as: lambda min} and \ref{as: bernoulli} hold, where $C$, $d_{\max}$, $\lambda_{\min}$, $\rho$, $\sigma_{\min}^2$, and $\epsilon$ are constant as $N \rightarrow \infty$. Then $\beta_\ell$ is contained by the interval $\hat{\beta}_\ell - [\max(U, U'), \, \min(L, L')]$ with probability asymptotically lower bounded by $(1-2\alpha)$.
\end{proposition}

\paragraph{Discussion} For technical reasons, Proposition \ref{th: regression normality} uses interval boundaries that differ slightly from those given in \eqref{eq: U}-\eqref{eq: L}, in two respects:
\begin{enumerate}
  \item The terms $\mathbb{E}w^T\theta$ and $\operatorname{Var}(w^T\theta)$ appearing in \eqref{eq: U}-\eqref{eq: L} have been replaced by $\bar{w}^T\theta$ and $\operatorname{Var}(\ddot{w}^T\theta)$ in \eqref{eq: regression normality U}-\eqref{eq: regression normality L}, where $\bar{w}$ and $\ddot{w}$ are given by \eqref{eq: bar w}-\eqref{eq: ddot w}. This arises because our proof applies Taylor approximation to a function $h_\theta(T)$ representing $(\hat{\beta}_\ell - \beta_\ell)$, the estimation error of the regression estimate. This centers the asymptotic distribution at $h_\theta(\mathbb{E}T)$ rather than at $\mathbb{E}h_\theta(T)$, and shows weak convergence of $h_\theta(T) - h_\theta(\mathbb{E}T)$ to its linearized approximation $\nabla h_\theta(\mathbb{E}T)^T(T - \mathbb{E}T)$, which will be proven to equal $\ddot{w}^T\theta$, and hence to have variance $\operatorname{Var}(\ddot{w}^T\theta)$. 

  In our data examples, using $(L,U)$ defined by either \eqref{eq: regression normality U}-\eqref{eq: regression normality L} or \eqref{eq: U}-\eqref{eq: L} gave identical results, up to the precision shown in our tables. If we assume bounded third moments for the estimation error normalized by the standard deviation of its linearized approximation,
  \[ \sup_N \ \mathbb{E}\left[ \left| \frac{\hat{\beta}_\ell - \beta_\ell}{\sqrt{\operatorname{Var}\left[\nabla h(\mathbb{E}T)^T(T -\mathbb{E}T)\right]}} \right|^3 \right] < \infty, \]
  then Proposition \ref{th: regression normality} holds with $\mathbb{E}w^T\theta$ and $\operatorname{Var(w^T\theta)}$ used in place of $\bar{w}^T\theta$ and $\operatorname{Var}(\ddot{w}^T\theta)$. 

  \item We take the union $[L,\, U]$ with a ``backup'' interval whose variance component is $o(N^{-1/2})$. This ensures coverage even if the variance of the estimation error is so small that asymptotic normality does not hold, due to degeneracy. 
\end{enumerate}

\section{Proofs}  \label{sec: proofs}

\subsection{Proof of \eqref{eq: tau number of affected units} and \eqref{eq: w number of affected units}}

By H\"older's inequality, it holds that for any $w, A \in \mathbb{R}^N$ that
\[ \left| w^TA \right| \leq \left(\max_{i} |w_i|\right) \cdot \left(\sum_{i=1}^N |A_i|\right) \]
Since $A = Y - \theta$, where $Y$ and $\theta$ are binary valued, we may substitue $|A_i| = 1\{Y_i \neq \theta_i\}$ and rearrange to show
\[ \sum_{i=1}^N 1\{Y_i \neq \theta_i\} \geq \frac{\left| w^T A \right|}{ \max_{i} |w_i| } \]
It can be seen that $\tau_1 = w^T A$ for $w$ given by
\begin{equation} \label{eq: tau1 as w}
 w_i = \begin{cases} 1/N_1 & \text{ if } X_i=1 \\ -1/N_0 & \text{ if } X_i = 0 \end{cases},
\end{equation}
Substituting this and \eqref{eq: tau1 as w} yields
\[ \sum_{i=1}^N 1\{Y_i \neq \theta_i\} \geq |\tau_1| \cdot \max( N_1, N_0 ) \]
proving \eqref{eq: tau number of affected units}.

To show \eqref{eq: w number of affected units}, we observe that H\"older's inequality also implies that
\[ \left| \sum_{i: w_i \neq 0} w_i A_i \right| \leq \max_{i: w_i \neq 0} |w_i| \cdot \sum_{i: w_i \neq 0} |A_i| \]
Substituting $\sum_{i: w_i \neq 0} w_i A_i = w^T A$ and proceeding similarly proves \eqref{eq: w number of affected units}.


\subsection{Proof of Proposition \ref{th: CLT tau}} \label{sec: CLT tau proof}

Our proof of Propositions \ref{th: CLT tau} will use the following finite population central limit theorem.

\begin{theorem}{\citepappendix[Thm. 1]{li2017general}} \label{th: finite pop CLT}
Let $\bar{u}_S$ denote the average of $N_1$ units sampled without replacement from the finite population $u_1,\ldots,u_N$, with $\bar{u} = N^{-1} \sum_i u_i$.  As $N \rightarrow \infty$, if
\begin{equation} \label{eq: finite pop CLT condition}
 \frac{1}{\min(N_1,N-N_1)}\cdot\frac{m_N}{v_N} \rightarrow 0,
 \end{equation}
where $m_N$ and $v_N$ are given by
\[ m_N = \max_i  (u_i - \bar{u})^2 \qquad \text{and} \qquad v_N = \frac{1}{N-1} \sum_i (u_i - \bar{u})^2,\]
then $(\bar{u}_S - \bar{u}) / \sqrt{\operatorname{Var}(\bar{u}_S)} \rightarrow N(0,1)$ in distribution.
\end{theorem}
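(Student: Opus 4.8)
The plan is to prove this finite-population CLT by Stein's method of exchangeable pairs, exploiting the fact that simple random sampling without replacement supplies an \emph{exact} linear regression identity. First I would make two reductions. Replacing each $u_i$ by $u_i - \bar{u}$ changes neither the fluctuation $\bar{u}_S - \bar{u}$ nor $\operatorname{Var}(\bar{u}_S)$, so I may assume $\sum_{i=1}^N u_i = 0$. Because the total is then zero, the complementary sample obeys $\bar{u}_{S^c} = -\tfrac{N_1}{N-N_1}\bar{u}_S$, so the standardized sample mean for $N_1$ draws and for $N-N_1$ draws are deterministic rescalings of each other; since both the condition \eqref{eq: finite pop CLT condition} and the conclusion are invariant under $N_1 \leftrightarrow N-N_1$, I may assume $N_1 \le N/2$, i.e.\ $\min(N_1, N-N_1) = N_1$. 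I would then record the exact moments $\mathbb{E}[\bar{u}_S] = 0$ and $\operatorname{Var}(\bar{u}_S) = \tfrac{N-N_1}{N_1 N} v_N$, and set $T = \sum_{i \in S} u_i$, $\sigma_T^2 = \tfrac{N_1(N-N_1)}{N} v_N$, and $W = T/\sigma_T$, so that the goal is $W \Rightarrow N(0,1)$.

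Next I would build the exchangeable pair. Given $S$, draw $I$ uniformly from $S$ and $J$ uniformly from $S^c$, independently, and set $S' = (S \setminus \{I\}) \cup \{J\}$; then $(S,S')$ is exchangeable and $W' = W + (u_J - u_I)/\sigma_T$. Using $\sum_{j \in S^c} u_j = -T$, a one-line computation gives $\mathbb{E}[u_J - u_I \mid S] = -T\cdot \tfrac{N}{N_1(N-N_1)}$, hence the exact identity $\mathbb{E}[W' - W \mid W] = -\lambda W$ with $\lambda = \tfrac{N}{N_1(N-N_1)}$. The standard exchangeable-pairs bound then controls the Wasserstein distance of $W$ to $N(0,1)$ by a third-moment term $\lambda^{-1}\mathbb{E}|W'-W|^3$ and a conditional-variance term $\lambda^{-1}\sqrt{\operatorname{Var}\big(\mathbb{E}[(W'-W)^2 \mid S]\big)}$. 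The first is immediate: the pointwise bound $|W'-W| \le 2\sqrt{m_N}/\sigma_T$ together with the generic identity $\mathbb{E}(W'-W)^2 = 2\lambda$ gives $\lambda^{-1}\mathbb{E}|W'-W|^3 \le 4\sqrt{m_N}/\sigma_T$, and since $\sigma_T^2 \ge \tfrac12 N_1 v_N$ under $N_1 \le N/2$, this is $O\!\big(\sqrt{m_N/(N_1 v_N)}\big)$, which vanishes by \eqref{eq: finite pop CLT condition}.

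I expect the conditional-variance term to be the main obstacle. Expanding $(u_J-u_I)^2$ and again using $\sum_{j \in S^c} u_j = -T$, the conditional second moment $\mathbb{E}[(W'-W)^2 \mid S]$ reduces to an affine function of the sample sum of squares $A = \sum_{i \in S} u_i^2$ and of $T^2$. I would bound its variance through $\operatorname{Var}(A)$ and $\operatorname{Var}(T^2)$: the former is a without-replacement sample sum whose variance is at most $\tfrac{N_1(N-N_1)}{N} m_N v_N$ (using $u_i^2 \le m_N$), while the latter requires a fourth-moment estimate for $T$. After dividing by $\lambda$ and $\sigma_T^2$, both contributions are once more governed by $m_N/(N_1 v_N)$ and vanish under \eqref{eq: finite pop CLT condition}; combining the two bounds shows the Wasserstein, hence Kolmogorov, distance from $W$ to $N(0,1)$ tends to zero. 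As an alternative route that sidesteps the quadratic-form variance computation, I would note that this is exactly the Erd\H{o}s--R\'enyi / H\'ajek finite-population CLT, whose classical proof instead couples the without-replacement sample to independent draws and invokes the Lindeberg--Feller theorem.
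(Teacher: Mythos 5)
Your argument is correct, but it takes a genuinely different route from the paper, which in fact contains no proof of this statement at all: the theorem is quoted background, cited as Theorem 1 of Li and Ding (2017), who obtain it by checking that the condition $m_N/(\min(N_1,N-N_1)\,v_N)\to 0$ implies the Lindeberg-type condition of the classical Erd\H{o}s--R\'enyi/H\'ajek finite-population CLT, whose proof couples without-replacement sampling to independent draws---exactly the ``alternative route'' you name in your final sentence. Your main argument via Stein's exchangeable pairs is therefore a self-contained substitute, and its ingredients check out: the centering and complementation reductions are valid (after centering, $N_1\bar u_S+(N-N_1)\bar u_{S^c}=0$, so the two standardized means coincide up to sign and you may take $N_1\le N/2$); the linearity identity $\mathbb{E}[W'-W\mid S]=-\lambda W$ with $\lambda=N/(N_1(N-N_1))$ is exact and passes to conditioning on $W$ by the tower property, while Jensen lets you condition the variance term on $S$; and the third-moment bound $\lambda^{-1}\mathbb{E}|W'-W|^3\le 4\sqrt{m_N}/\sigma_T=O(\sqrt{m_N/(N_1 v_N)})$ is right, using $\mathbb{E}(W'-W)^2=2\lambda$. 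The only loose end is the one you flag: the $\operatorname{Var}(T^2)$ estimate. It closes in one line via Hoeffding's convex-order domination of without-replacement sums by i.i.d.\ sums, which gives $\mathbb{E}T^4\le C(N_1 m_N v_N+N_1^2 v_N^2)$; since $\lambda\sigma_T^2=v_N$, the $A$-part of the conditional-variance term contributes $O(\sqrt{m_N/(N_1v_N)})$ and the $T^2$-part contributes $2\sqrt{\operatorname{Var}(T^2)}/(N\sigma_T^2)=O(N^{-1}\sqrt{m_N/(N_1v_N)}+N^{-1})$, both vanishing under \eqref{eq: finite pop CLT condition}. With that supplied, your proof is complete and actually yields more than the cited theorem as used in the paper: an explicit Wasserstein (hence Kolmogorov) rate of order $\sqrt{m_N/(\min(N_1,N-N_1)\,v_N)}$, whereas the citation route gives only distributional convergence (under the somewhat weaker H\'ajek condition). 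What the paper's route buys is brevity; what yours buys is self-containedness and a quantitative error bound.
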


Lemma \ref{le: diff means var} gives a simpler expression for the difference in means of $\theta$, from which its variance can be derived.

\begin{lemma} \label{le: diff means var}
  Let $X$ be assigned by sampling $N_1$ units without replacement, and let $\bar{\theta}$ and $\sigma_{\theta}^2$ denote the mean and variance of $\{\theta_i\}_{i=1}^N$. Then it holds that 
  \[ \sum_{X_i=1} \frac{\theta_i}{N_1}  - \sum_{X_i=0} \frac{\theta_i}{N_0} = \frac{1}{N_1} \sum_{i=1}^N X_i \frac{(\theta_i - \bar{\theta})N}{N_0}, \]
  which is the average of $N_1$ samples without replacement from the finite population $u_1,\ldots,u_N$, where $u_i = (\theta_i - \bar{\theta})N/N_0$, and which has variance
    \begin{align} \label{eq: proof CLT tau var}
      \frac{1}{N_1}\frac{N_0}{N-1}\frac{N^2}{N_0^2} \sigma_\theta^2.
    \end{align}
\end{lemma}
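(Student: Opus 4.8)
The plan is to prove the three assertions in turn, each reducing to elementary algebra together with the standard variance formula for sampling without replacement; there is no genuine obstacle here, only careful bookkeeping.

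First I would establish the stated identity. Starting from the left-hand side $\frac{1}{N_1}\sum_{X_i=1}\theta_i - \frac{1}{N_0}\sum_{X_i=0}\theta_i$, I would eliminate the control sum via $\sum_{X_i=0}\theta_i = N\bar\theta - \sum_{X_i=1}\theta_i$, so the expression becomes $\left(\frac{1}{N_1}+\frac{1}{N_0}\right)\sum_{X_i=1}\theta_i - \frac{N\bar\theta}{N_0}$. Since $\frac{1}{N_1}+\frac{1}{N_0} = \frac{N}{N_1N_0}$, this equals $\frac{N}{N_1N_0}\sum_{X_i=1}\theta_i - \frac{N\bar\theta}{N_0}$. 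On the other side, the claimed right-hand side is $\frac{1}{N_1}\sum_i X_i \frac{(\theta_i-\bar\theta)N}{N_0} = \frac{N}{N_1N_0}\sum_{X_i=1}(\theta_i - \bar\theta)$, and using $\sum_{X_i=1}\bar\theta = N_1\bar\theta$ this is exactly $\frac{N}{N_1N_0}\sum_{X_i=1}\theta_i - \frac{N\bar\theta}{N_0}$. The two sides coincide, which proves the identity.

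Next, because $X$ is obtained by sampling $N_1$ units without replacement, exactly $N_1$ of the indicators $X_i$ equal one and they index the sampled units. Hence $\frac{1}{N_1}\sum_i X_i u_i$ with $u_i = (\theta_i-\bar\theta)N/N_0$ is precisely the average $\bar u_S$ of the $N_1$ values of $u$ drawn without replacement from the population $u_1,\ldots,u_N$, which is the second assertion.

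Finally, for the variance I would first observe that the population $u$ has mean zero, since $\sum_i u_i = \frac{N}{N_0}\sum_i(\theta_i-\bar\theta) = 0$. Applying the classical formula for the variance of the mean of a simple random sample of size $N_1$ taken without replacement from $N$ units, namely $\operatorname{Var}(\bar u_S) = \frac{N_0}{N}\cdot\frac{S_u^2}{N_1}$ with $S_u^2 = \frac{1}{N-1}\sum_i (u_i - \bar u)^2 = \frac{1}{N-1}\sum_i u_i^2$, and substituting $\sum_i u_i^2 = \frac{N^2}{N_0^2}\sum_i(\theta_i-\bar\theta)^2 = \frac{N^3}{N_0^2}\sigma_\theta^2$, I would obtain $\operatorname{Var}(\bar u_S) = \frac{N_0}{N}\frac{1}{N_1}\frac{N^3}{N_0^2(N-1)}\sigma_\theta^2 = \frac{1}{N_1}\frac{N_0}{N-1}\frac{N^2}{N_0^2}\sigma_\theta^2$, matching \eqref{eq: proof CLT tau var}. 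The only non-elementary ingredient is this sampling-variance formula, which is standard; everything else is direct substitution, so I anticipate no real difficulty.
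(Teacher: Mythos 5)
Your proposal is correct and follows essentially the same route as the paper: verify the identity by elementary manipulation (your substitution of $\sum_{X_i=0}\theta_i = N\bar\theta - \sum_{X_i=1}\theta_i$ is just a reorganization of the paper's centering argument, which instead uses $\sum_i(\theta_i-\bar\theta)=0$), then invoke the standard variance formula for the mean of a simple random sample without replacement. Your use of the $\frac{1}{N-1}$-normalized $S_u^2$ with the finite-population correction $\frac{N_0}{N}$ is the same formula the paper writes as $\frac{1}{N_1}\frac{N_0}{N-1}$ times the population variance of $\{u_i\}$, and both yield \eqref{eq: proof CLT tau var}.
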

  
\begin{proof}[Proof of Lemma \ref{le: diff means var}]
  \begin{align*}
    & \sum_{X_i=1} \frac{\theta_i - \bar{\theta}}{N_1}  - \sum_{X_i=0} \frac{\theta_i - \bar{\theta}}{N_0} \\
    & \hskip.3cm = \sum_{i=1}^N \frac{X_i(\theta_i - \bar{\theta})}{N_1}  - \sum_{i=1}^N \frac{(1-X_i)(\theta_i - \bar{\theta})}{N_0} \\
    & \hskip.3cm  = \sum_{i=1}^N X_i (\theta_i - \bar{\theta}) \frac{N}{N_1N_0}  - \frac{1}{N_0} \underbrace{\sum_{i=1}^N (\theta_i - \bar{\theta})}_{=0} \\    
    & \hskip.3cm  = \frac{1}{N_1} \sum_{i=1}^N X_i \frac{(\theta_i - \bar{\theta})N}{N_0}.
  \end{align*}
  This equals the average of $N_1$ samples without replacement from the finite population $u_1,\ldots,u_N$, where $u_i = (\theta_i - \bar{\theta})N/N_0$, which is known to have variance equal to $\frac{1}{N_1}\frac{N_0}{N-1}$ multiplied by the variance of $\{u_i\}$, which equals \eqref{eq: proof CLT tau var}.    
\end{proof} 
\begin{proof}[Proof of Proposition \ref{th: CLT tau}]
  It can be seen that $\tau_1 - \hat{\tau}_1$ equals
  \[\tau_1 - \hat{\tau}_1 = \sum_{X_i=1} \frac{\theta_i}{N_1}  - \sum_{X_i=0} \frac{\theta_i}{N_0}. \]
By Lemma \ref{le: diff means var}, this equals the average of $N_1$ samples without replacement from the finite population $u_1,\ldots,u_N$, where $u_i = (\theta_i - \bar{\theta})N/N_0$, with variance given by \eqref{eq: proof CLT tau var}. Hence the terms $m_N$ and $v_N$ as defined in Theorem \ref{th: finite pop CLT} can be bounded by
    \begin{equation}\label{eq: proof CLT tau constants}
    m_N \leq (N/N_0)^2 \qquad \text{and} \qquad v_N = \sigma_\theta^2 (N/N_0)^2.
    \end{equation}
  
  To prove the proposition, we will divide the sequence of experiments into two subsequences: (i) those for which $\sigma_\theta^2 \geq (\log N)/N$, and (ii) those for which $\sigma_\theta^2 < (\log N)/N$, and then show coverage of the proposed interval for each subsequence separately.
  
  For subsequence (i), we observe that by \eqref{eq: proof CLT tau constants}, along with $N_1 \rightarrow \rho N$ and $\sigma_\theta^2 \geq (\log N)/N$, it can be seen that the CLT condition \eqref{eq: finite pop CLT condition} holds, i.e., 
  \begin{equation*} 
   \frac{1}{\min(N_1,N-N_1)}\cdot\frac{m_N}{v_N} \rightarrow 0.
   \end{equation*}    
  Thus by Theorem \ref{th: finite pop CLT}, it holds that
  \begin{equation}\label{eq: proof CLT tau CLT}
   \mathbb{P}\left( |\tau_1 - \hat{\tau}_1| \geq z_{1-\alpha} \sqrt{\operatorname{Var}(\tau_1 - \hat{\tau}_1)} \right) \rightarrow 1 - 2\alpha.
  \end{equation}
  Since $\theta \in \{0,1\}^N$, it holds that $\sigma_\theta^2 \leq 1/4$, and plugging this bound into \eqref{eq: proof CLT tau var} yields an upper bound on the variance,
  \begin{align} \label{eq: proof CLT tau var bound}
    \operatorname{Var}(\tau_1 - \hat{\tau}_1) & \leq \frac{1}{4}\frac{N}{N-1}\frac{N}{N_1 N_0},
  \end{align}
  and combining \eqref{eq: proof CLT tau var bound} and \eqref{eq: proof CLT tau CLT} implies for subsequence (i) that
  \begin{equation}\label{eq: proof CLT tau CLT 2}
   \mathbb{P}\left( |\tau_1 - \hat{\tau}_1| \geq z_{1-\alpha} \sqrt{\frac{1}{4}\frac{N}{N-1}\frac{N}{N_1 N_0}} \right) \rightarrow  2\alpha,
  \end{equation}
  and hence that $\tau_1$ is in the interval $\hat{\tau}_1 \pm z_{1-\alpha} \sqrt{\frac{1}{4}\frac{N}{N-1}\frac{N}{N_1 N_0}}$ with the same probability.
  
  For subsequence (ii), plugging $\sigma_\theta^2 < (\log N)/N$ into \eqref{eq: proof CLT tau var} implies that $\sqrt{\operatorname{Var}(\tau_1 - \hat{\tau}_1)}$ is a vanishing fraction of $z_{1-\alpha} \sqrt{\frac{1}{4}\frac{N}{N-1}\frac{N}{N_1 N_0}}$, the width of the proposed confidence interval. As a result, for this subsequence it holds by Chebychev's inequality that 
  \begin{equation}\label{eq: proof CLT tau CLT 3}
   \mathbb{P}\left( |\tau_1 - \hat{\tau}_1| \geq z_{1-\alpha} \sqrt{\frac{1}{4}\frac{N}{N-1}\frac{N}{N_1 N_0}} \right) \rightarrow 0, 
  \end{equation}
  and hence that $\tau_1$ is in the interval $\hat{\tau}_1 \pm z_{1-\alpha} \sqrt{\frac{1}{4}\frac{N}{N-1}\frac{N}{N_1 N_0}}$ with the same probability.
  
  Combining \eqref{eq: proof CLT tau CLT 2} for subsequence (i) and \eqref{eq: proof CLT tau CLT 3} for subsequence (ii) implies that for combined sequence of experiments, $\tau_1$ is in the interval $\hat{\tau}_1 \pm z_{1-\alpha} \sqrt{\frac{1}{4}\frac{N}{N-1}\frac{N}{N_1 N_0}}$ with probability asymptotically lower bounded by $1- 2\alpha$, proving the proposition.
  
%
%
  
\end{proof}

\subsection{Proof of Proposition \ref{th: regression consistency}}

The proof of Proposition \ref{th: regression consistency} will use the Azuma-Hoeffding (or McDiarmid's) inequality \citepappendix[Thm 6.2]{boucheron2013concentration}, which states that given independent variables $U = (U_1,\ldots,U_N)$, and a function satisfying the bounded difference property
\[ |f(U) - f(U')| \leq c_i \text{ if } U_j = U_j' \text{ for all } j \neq i,\]
it holds that $\mathbb{P}\left( | f(U) - \mathbb{E}f | > t\right) \leq 2\exp\left( - 2t^2/\sum_i c_i^2 \right)$.

Additionally, we will use a perturbation bound on matrix inverse found in \citeappendix{stewart1969continuity}
\begin{lemma}\citeappendix[Eq. 1.5]{stewart1969continuity} \label{le: matrix inverse bound}
If a matrix $M$ has inverse $M^{-1}$, then for any matrix norm $\| \cdot\|$ with $\|I\|=1$, if it holds that
\[ \|M^{-1}\|\, \|E\| < 1,\]
then
\begin{equation}\label{eq: matrix inverse bound}
\frac{\|(M + E)^{-1} - M^{-1}\|}{\|M^{-1}\|} \leq \frac{\kappa(M) \|E\|/\|M\|}{1 - \kappa(M) \|E\|/\|M\|},
\end{equation}
where $\kappa(M) = \|M\|\, \|M^{-1}\|$.
\end{lemma}
As a consequence of Lemma \ref{le: matrix inverse bound}, given positive definite $M$ with smallest eigenvalue lower bounded by $\lambda_{\min}$ and operator norm $\|M\|_{\textup{op}} < B$, and $\|E\|_{\textup{op}} \leq \min(\lambda_{\min}, \lambda_{\min}^2/B)$, then
\begin{align}
 \label{eq: my inverse bound}\|(M+E)^{-1} - M^{-1}\|_\textup{op} &\leq \frac{\frac{B}{\lambda_{\min}^3} \|E\|_{\textup{op}}}{1 - \frac{B}{\lambda_{\min}^2}\|E\|_{\textup{op}}}, 
\end{align}
implying uniform continuity of matrix inverse over positive definite matrices with smallest eigenvalue at least $\lambda_{\min}.$

\begin{proof}[Proof of Proposition \ref{th: regression consistency}]
By the definitions of $(\beta,\gamma)$ in \eqref{eq: regression consistency} and of $(\hat{\beta}, \hat{\gamma})$ in \eqref{eq: regression consistency estimator}, linearity of least-squares estimation, and $A_i = Y_i - \theta_i$, it can be seen that
\begin{align*}
 \left[\begin{array}{c} \hat{\beta} - \beta \\ \hat{\gamma} - \gamma \end{array}\right] & = \left(\frac{1}{N} \sum_{i=1}^N v_i v_i^T \right)^{-1} \cdot \frac{1}{N} \sum_{i=1}^N (Y_i - A_i) v_i \\
 & = \left(\frac{1}{N} \sum_{i=1}^N v_i v_i^T \right)^{-1} \cdot \frac{1}{N} \sum_{i=1}^N \theta_i v_i \\
 & = M^{-1} b,
 \end{align*}
where $M$ and $b$ are given by
\[ M = \frac{1}{N}\sum_{i=1}^N v_i v_i^T \qquad b = \frac{1}{N} \sum_{i=1}^N \theta_i v_i.\]
Writing $M(U)$ and $b(U)$ to make their dependence on $U$ explicit, it can be seen that if $U$ and $U'$ differ only at a single entry, then
\[ |M_{ij}(U) - M_{ij}(U')| \leq \frac{C d_{\max} B^2}{N}  \qquad  | b_i(U) - b_i(U') | \leq \frac{C d_{\max} B}{N}, \qquad i,j \in [N],\]
as changing $U$ in a single entry changes at most $C$ entries of $X$, which changes at most $Cd_{\max}$ members of $\{\xi_i(X)\}_{i=1}^N$ and hence of $\{v_i\}_{i=1}^N$. As $\| v_i\|_\infty \leq B$ by Assumption \ref{as: bounded vi}, by Azuma-Hoeffding inequality it holds that 
\begin{align*}
 \mathbb{P}\left(| M_{ij} - \mathbb{E}M_{ij}| > \frac{t}{\sqrt{N}} \right) & \leq 2 \operatorname{exp}\left(-\frac{2t^2}{(C d_{\max} B^2)^2}\right) \\
  \mathbb{P}\left(|b_{i} - \mathbb{E}b_{i}| > \frac{t}{\sqrt{N}} \right) & \leq 2 \operatorname{exp}\left(-\frac{2t^2}{(C d_{\max} B)^2}\right) 
 \end{align*}
It follows that 
\begin{align*}
\mathbb{E}M_{ij} - M_{ij} &= O_P(N^{-1/2}), & \mathbb{E}b_i - b_i & = O_P(N^{-1/2}), & i,j \in [N] 
\end{align*}
Letting $\bar{M}$ and $\bar{b}$ abbreviate $\mathbb{E}M$ and $\mathbb{E}b$, it follows that 
\begin{align*}
 \left[\begin{array}{c} \hat{\beta} - \beta \\ \hat{\gamma} - \gamma \end{array}\right] & = M^{-1} b \\ 
 & = \left[ \bar{M} + O_P\left(N^{-1/2}\right)\right]^{-1} \cdot \left(\bar{b} + O_P\left(N^{-1/2}\right)\right) \\
 & = \bar{M}^{-1}\bar{b} + O_P\left(N^{-1/2}\right) 
 \end{align*}
 where the first equality is by definition of $M$ and $b$; the second equality by McDiarmid; and the third equality follows from uniform continuity as implied by \eqref{eq: my inverse bound}.
 
 To complete the proof, we must show that the first $d$ entries of $\bar{M}^{-1}\bar{b}$ equal zero. To do this, we treat cases \ref{as: phi 1} and \ref{as: phi 2} separately:
 \begin{enumerate}
  \item Under \ref{as: phi 1}, $v_i$ is given by
  \[ v_i = \left[\begin{array}{c} \xi_i(X) \\ \mathbb{E}\xi_i(X) \end{array}\right], \]
  and hence $\bar{M}$ and $\bar{b}$ can be seen to equal
  \[ \bar{M} = \frac{1}{N}\left[\begin{array}{cc} \bar{M}_{1} & \bar{M}_{2} \\ \bar{M}_{2} & \bar{M}_{2}\end{array} \right] \qquad \bar{b} = \frac{1}{N}\left[\begin{array}{c} \bar{b}_1 \\ \bar{b}_1 \end{array}\right], \]
  for $\bar{M}_1$, $\bar{M}_2$ and $\bar{b}_1$ given by 
\begin{align*}
\bar{M}_{1} &=  \sum_{i=1}^N \mathbb{E}[\xi_i(X) \xi_i(X)^T] & 
\bar{M}_2 &= \sum_{i=1}^N \mathbb{E}\xi_i(X) \mathbb{E}\xi_i(X)^T \\ 
\bar{b}_1 & = \sum_{i =1}^N \theta_i \mathbb{E}\xi_i(X)
\end{align*}
 As $\bar{M}$ is invertible, it follows that its submatrix $\bar{M}_2$ is also invertible, and hence that 
  \[ \left[\begin{array}{cc} \bar{M}_{1} & \bar{M}_{2} \\ \bar{M}_{2} & \bar{M}_{2}\end{array} \right]  \left[\begin{array}{c} 0 \\ \bar{M}_2^{-1} b_1 \end{array}\right] = \left[\begin{array}{c} \bar{b}_1 \\ \bar{b}_1 \end{array}\right],\]
  proving that the first $d$ entries of $\bar{M}^{-1}\bar{b}$ equal zero. This proves the claim.

  \item To show the result under \ref{as: phi 2}, let $n_k$ and $m_k$ for $k=1,\ldots,K$ be given by
  \[ n_k = \sum_{i=1}^N 1\{\kappa(i) = k\} \qquad m_k = \sum_{i=1}^N \theta_i 1\{\kappa(i) = k\} \]
  Then $\bar{M}$ and $\bar{b}$ are given by 
\[\bar{M} = \frac{1}{N}\left[\begin{array}{cc} \bar{M}_{11} & \bar{M}_{12} \\ \bar{M}_{12}^T & \bar{M}_{22} \end{array}\right] \qquad \bar{b} = \frac{1}{N}\sum_{k=1}^K m_k \left[\begin{array}{c} \bar{\xi}_k \\ e_k\end{array}\right],\]
for $\bar{M}_{11}$, $\bar{M}_{12}$, and $\bar{M}_{22}$ given by
\begin{align*}
\bar{M}_{11}  & = \sum_{i=1}^N \mathbb{E} \xi_i(X) \xi_i(X)^T &
\bar{M}_{12} & = \left[ n_1 \bar{\xi}_1\ \cdots \ n_K \bar{\xi}_K \right] \\ 
\bar{M}_{22} & = \operatorname{diag}(n_1,\ldots,n_K)
\end{align*}
It follows that
\[ \left[\begin{array}{c} \bar{M}_{12} \\ \bar{M}_{22} \end{array}\right]  =  \left[\begin{array}{ccc} \bar{\xi}_1 & \cdots & \bar{\xi_K} \\ e_1 & \cdots & e_K \end{array}\right] \cdot \operatorname{diag}(n_1,\ldots,n_K)\]
As $\bar{M}$ is invertible and $\bar{b}$ is in the span of the columns of $\left[\begin{array}{c} \bar{M}_{12} \\ \bar{M}_{22} \end{array}\right]$, it can be seen that the first $d$ entries of $\bar{M}^{-1}\bar{b}$ must equal zero, proving the result.
\end{enumerate}
 \end{proof}

\subsection{Proof of Proposition \ref{th: regression normality}}

Our proof will use the following tools. 

Theorem \ref{th: CLT dependency graph} is a central limit theorem  \citepappendix[Thm 2.2]{rinott1994normal} that requires the notion of a dependency graph: Given a collection of dependent random variables $X_1,\ldots,X_N$, a graph with $N$ nodes is said to be its dependency graph if for any two disjoint subsets $S_1$ and $S_2$, if there is no edge in the graph connecting $S_1$ to $S_2$ then the sets of random variables $\{X_i: i \in S_1\}$ and $\{X_i: i \in S_2\}$ are independent.

\begin{theorem}\citeappendix[Thm 2.2]{rinott1994normal} \label{th: CLT dependency graph}
  Let $X_1,\ldots,X_N$ be zero-mean random variables having a dependency graph whose maximal degree is strictly less than $D$, satisfying $|X_i - \mathbb{E}X_i| \leq B$ a.s., $i=1,\ldots,N$, and $\operatorname{Var}\sum_{i=1}^N X_i = \sigma^2 > 0$. Let $\Phi$ denote the CDF of a standard normal. Then for all $t \in \mathbb{R}$ it holds that 
\begin{align*}
 & \left| \mathbb{P}\left(\frac{\sum_{i=1}^N X_i }{\sigma} \leq t\right) - \Phi(t) \right| \\ 
 & \hskip.3cm {} \leq \frac{1}{\sigma} \left\{ \sqrt{\frac{1}{2\pi}} DB + 16 \left(\frac{n}{\sigma^2}\right)^{1/2}\ D^{3/2}B^2 + 10\left(\frac{n}{\sigma^2}\right)D^2B^3 \right\}
\end{align*}
  
\end{theorem}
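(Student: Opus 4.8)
The plan is to prove Theorem \ref{th: CLT dependency graph} by Stein's method for normal approximation, using the dependency graph to manufacture the local independence that drives all the cancellations. I would first normalize: set $\xi_i = (X_i - \mathbb{E}X_i)/\sigma$, so that $W = \sum_{i=1}^N \xi_i$ has $\mathbb{E}W = 0$, $\operatorname{Var}(W) = 1$, and $|\xi_i| \le B/\sigma$ almost surely. For a fixed $t$, let $f = f_t$ be the bounded solution of the Stein equation $f'(w) - w f(w) = \mathbf{1}\{w \le t\} - \Phi(t)$, so that $\mathbb{P}(W \le t) - \Phi(t) = \mathbb{E}[f'(W) - W f(W)]$, and it suffices to bound the right-hand side uniformly in $t$. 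At this stage I would record the standard bounds on the indicator Stein solution, namely $\|f\|_\infty \le \sqrt{2\pi}/4$, $\|f'\|_\infty \le 1$, and an increment bound of the form
\[ |f'(w) - f'(v)| \le (|w| + 1)\,|w - v| + \mathbf{1}\{w \wedge v \le t \le w \vee v\}, \]
whose final indicator term encodes the jump of $f'$ at $t$ and is the source of all the difficulty.

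Next I would introduce the local decomposition. For each $i$ let $A_i$ consist of $i$ together with its neighbors in the dependency graph, so $|A_i| \le D$ (since the maximal degree is below $D$); set $\eta_i = \sum_{j \in A_i}\xi_j$ and $W_i = W - \eta_i$. The defining property of a dependency graph gives $\xi_i \perp W_i$, hence $\mathbb{E}[\xi_i f(W_i)] = 0$ and
\[ \mathbb{E}[W f(W)] = \sum_{i=1}^N \mathbb{E}\big[\xi_i\,(f(W) - f(W_i))\big]. \]
Taylor expanding $f(W) - f(W_i)$ about $W_i$ in the increment $\eta_i$ produces a leading term involving $\xi_i \eta_i f'(W)$ plus a remainder. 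Because $\mathbb{E}[\xi_i \xi_j] = 0$ whenever $j \notin A_i$, one has $\sum_i \mathbb{E}[\xi_i \eta_i] = \operatorname{Var}(W) = 1$, which is exactly what cancels the $\mathbb{E}f'(W)$ term. Writing $\widehat\sigma^2 = \sum_i \xi_i \eta_i$ (random, with mean $1$), the target $\mathbb{E}[f'(W) - W f(W)]$ decomposes into a variance-fluctuation term $\mathbb{E}[f'(W)(1 - \widehat\sigma^2)]$ and a second-order Taylor remainder of the form $\sum_i \mathbb{E}[\xi_i \eta_i\,(f'(W) - f'(W - s\eta_i))]$ integrated over $s \in [0,1]$.

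The three summands of the stated bound then emerge from estimating these pieces, with the $n$ in the bound equal to $N$. The fluctuation term is at most $\|f'\|_\infty\,\mathbb{E}|1 - \widehat\sigma^2| \le \sqrt{\operatorname{Var}(\widehat\sigma^2)}$; since $\xi_i\eta_i$ and $\xi_{i'}\eta_{i'}$ are uncorrelated unless $A_i$ and $A_{i'}$ overlap (at most on the order of $D^2$ such $i'$ per $i$, hence order $N D^2$ contributing pairs, each controlled through $B$ and $\sigma$), a careful count yields the middle term of order $(n/\sigma^2)^{1/2} D^{3/2} B^2$. The Lipschitz part $(|W|+1)|\eta_i|$ of the second-order remainder, combined with the crude bounds $|\xi_i| \le B/\sigma$ and $|\eta_i| \le DB/\sigma$ summed over the $N$ indices, yields the last term of order $(n/\sigma^2) D^2 B^3$. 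The leading term $\sqrt{1/2\pi}\,DB/\sigma$ (whose constant $1/\sqrt{2\pi}$ is the maximal Gaussian density) comes from the indicator part of the remainder, via a small-interval concentration estimate of window width comparable to $\max_i|\eta_i| \le DB/\sigma$. Collecting the numerical constants from the Stein-solution bounds and the concentration estimate produces the explicit $\sqrt{1/2\pi}$, $16$, and $10$.

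The main obstacle is precisely the nonsmoothness of the indicator test function. Because $\mathbf{1}\{\,\cdot \le t\}$ is not differentiable, $f_t'$ has a jump at $t$ and is not globally Lipschitz, so the naive second-order remainder cannot be controlled by a bounded $f_t''$; instead the increment bound carries the term $\mathbf{1}\{W \wedge (W - s\eta_i) \le t \le W \vee (W - s\eta_i)\}$, whose expectation is the concentration probability $\mathbb{P}(t - |\eta_i| \le W \le t + |\eta_i|)$. Bounding this small-interval probability uniformly in $t$ is the crux of the argument. I would handle it either through a separate Stein-type concentration inequality for $W$ or, following Rinott, by an inductive bootstrap in which the Berry--Esseen bound under proof is fed back to control the concentration; optimizing the window width $|\eta_i| \le DB/\sigma$ and tracking constants then closes the proof.
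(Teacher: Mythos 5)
The paper does not prove this statement: it is quoted verbatim, with citation, from Rinott (1994, Theorem 2.2) and used as an off-the-shelf tool in the proof of Proposition \ref{th: general result}, so there is no in-paper proof to compare against. Your Stein's-method sketch --- the local neighborhoods $A_i$, the identity $\mathbb{E}[\xi_i f(W_i)]=0$, the decomposition into the fluctuation term $\mathbb{E}[f'(W)(1-\widehat{\sigma}^2)]$ plus a Taylor remainder, and the concentration/bootstrap treatment of the jump in $f_t'$ --- is exactly the route of the cited original and is correct as an outline, with one counting nuance: to obtain the $D^{3/2}$ in the middle term rather than $D^2$, the variance of $\widehat{\sigma}^2$ should be bounded by counting dependent pairs among the roughly $ND$ edge terms $\xi_i\xi_j$ (each pair dependent only when the two edges are within graph distance one, giving $\operatorname{Var}(\widehat{\sigma}^2) = O(ND^3B^4/\sigma^4)$) rather than among the $N$ products $\xi_i\eta_i$ as your heuristic ``$D^2$ such $i'$ per $i$'' suggests.
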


Lemma \ref{le: dependency graph} bounds the degree of the dependency graph for the thresholded exposures.

\begin{lemma}\label{le: dependency graph}
  Let the assumptions and definitions of Proposition \ref{th: regression normality} hold. Then the random vectors $\{v_i\}_{i=1}^N$ have dependency graph with maximum degree bounded by $d_{\max}^2$.
\end{lemma}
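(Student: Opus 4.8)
The plan is to exhibit an explicit dependency graph whose edges record overlapping neighborhoods, verify that it is a valid dependency graph using the independence of the treatments, and then bound its maximum degree by a counting argument driven by the in- and out-degree bounds on $A$.

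First I would record the locality and independence structure. By condition \ref{as: T}, each $\phi_i(X)$ is a function only of the treatments in the set $\mathcal{N}_i = \{i\} \cup \{j : A_{ij} = 1\}$, and by condition \ref{as: bernoulli} the coordinates of $X$ are mutually independent. Consequently, for any two disjoint index sets $S_1, S_2 \subseteq [N]$ satisfying $\bigl(\bigcup_{i \in S_1} \mathcal{N}_i\bigr) \cap \bigl(\bigcup_{j \in S_2} \mathcal{N}_j\bigr) = \emptyset$, the families $\{\phi_i(X) : i \in S_1\}$ and $\{\phi_j(X) : j \in S_2\}$ are measurable with respect to disjoint, hence independent, subvectors of $X$, and are therefore independent. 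This shows that the graph $G$ on $[N]$ that places an edge between $i$ and $j$ whenever $\mathcal{N}_i \cap \mathcal{N}_j \neq \emptyset$ is a valid dependency graph: if $G$ has no edge between $S_1$ and $S_2$, then the neighborhoods are pairwise disjoint across the two sets, so their unions are disjoint and the required independence follows from the preceding display.

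Next I would bound the maximum degree of $G$. Fix a unit $i$ and consider any neighbor $j$, so that some $k \in \mathcal{N}_i \cap \mathcal{N}_j$ exists. Membership $k \in \mathcal{N}_j$ forces $k = j$ or $A_{jk} = 1$, i.e. $j$ is an in-neighbor of $k$; hence for each fixed $k$ the number of admissible $j$ is controlled by the in-degree of $k$, which is at most $d_{\max}$ by condition \ref{as: bounded degree}. Since $k$ ranges over $\mathcal{N}_i$, whose size is controlled by the out-degree of $i$ and is at most $d_{\max}$ by the same condition, a union bound over $k \in \mathcal{N}_i$ yields at most $d_{\max} \cdot d_{\max} = d_{\max}^2$ neighbors $j$, which is the asserted bound.

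I expect the main obstacle to be purely in the bookkeeping of this final count: one must be careful about whether the unit $i$ itself and the self-term $j = k$ are included, folding these contributions into the in- and out-degree bounds rather than incurring spurious additive factors, so that the estimate lands exactly at $d_{\max}^2$. The independence step is routine given condition \ref{as: bernoulli}, so essentially all of the content lies in confirming that overlapping-neighborhood adjacency defines a legitimate dependency graph and then executing this degree count.
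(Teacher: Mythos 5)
Your proposal follows essentially the same route as the paper's proof: build a graph whose edges record overlapping treatment-dependence sets, verify the dependency-graph property from the independence of the treatments (condition \ref{as: bernoulli}) and the locality in condition \ref{as: T}, and bound the degree by a two-stage count using condition \ref{as: bounded degree}. Your independence step is correct, and in fact slightly more careful than the paper's: you work with the closed neighborhoods $\mathcal{N}_i = \{i\} \cup \{j : A_{ij}=1\}$, which correctly captures the dependence created by a direct edge $A_{ij}=1$ (both $\phi_i$ and $\phi_j$ then depend on $X_j$), whereas the paper joins $i$ and $j$ only when they share a common neighbor $\ell$ with $A_{i\ell}=A_{j\ell}=1$, which misses this case unless $A$ is implicitly read as having self-loops.

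The price of your more careful graph is exactly the bookkeeping issue you flag at the end, and it cannot be ``folded away'' as you hope: since $|\mathcal{N}_i| \leq d_{\max}+1$ and, for each fixed $k$, there are at most $1+d_{\max}$ units $j$ with $k \in \mathcal{N}_j$ (namely $j=k$ and the in-neighbors of $k$), the honest count is $(d_{\max}+1)^2 - 1 = d_{\max}^2 + 2d_{\max}$, strictly larger than the $d_{\max}^2$ you assert; your final multiplication $d_{\max}\cdot d_{\max}$ silently drops the $k=i$ and $j=k$ contributions, which are not absorbed by the in-/out-degree bounds on $A$. So as written there is a small gap between your argument and the stated constant --- one the paper avoids only by using the common-neighbor graph, whose degree genuinely is at most $d_{\max}^2$ but which is a complete dependency graph only under the self-loop reading. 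The discrepancy is immaterial downstream: Lemma \ref{le: dependency graph} is used solely to supply a bounded degree to Theorem \ref{th: CLT dependency graph}, where any constant bound, whether $d_{\max}^2$ or $(d_{\max}+1)^2$, suffices.
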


\begin{proof}[Proof of Lemma \ref{le: dependency graph}]
  Let $G'$ denote the graph with $N$ nodes which has an edge between $i$ and $j$ if $G_{i\ell} = G_{j\ell} = 1$ for any node $\ell$, so that $\ell$ is a common neighbor of both $i$ and $j$. Since $G$ has maximum degree $d_{\max}$, it can be seen that $G'$ has maximum degree $d_{\max}^2$.
  
  To see that $G'$ is the dependency graph for $\{v_i\}_{i=1}^N$, we observe that since treatments are assigned independently, if nodes $i$ and $j$ are not connected in $G'$, they have no common neighbors in $G$, and hence $v_i$ and $v_j$ depend on disjoint treatments and are independent. Similarly, given disjoint subsets $S_1$ and $S_2$ with no edges between them in $G'$, it follows that the sets $\{v_i: i \in S_1\}$ and $\{v_i: i \in S_2\}$ also depend on disjoint treatments and are are independent, making $G'$ the dependency graph of $\{v_i\}_{i=1}^N$.
\end{proof}

We will also use the mean-value theorem, which states that if a function $f$ is differentiable on an open set, it holds for any $x$ and $y$ in the set that
\[ f(x) - f(y) = \nabla f(z)^T (x- y),\]
where $z = tx + (1-t)y$ for some $t \in [0,1]$

Finally, we will use the following identity \citeappendix[Eq. 61]{petersen2008matrix}, which takes a matrix $M$ and vectors $a$ and $b$, and returns the derivative of $a^T M^{-1} b$ with respect to $M$:
\begin{equation} \label{eq: matrix inverse derivative}
 \frac{\partial a^T M b}{\partial M} = -M^{-T}ba^TM^{-T} 
\end{equation}

\begin{proof}[Proof of Proposition \ref{th: regression normality}]

The estimation error of $(\hat{\beta}_\ell - \beta_\ell)$ is given by
\begin{align}
\label{eq: regression normality h1} \hat{\beta}_\ell - \beta_\ell &= e_\ell^T \left(\frac{1}{N} \sum_{i=1}^N v_i v_i^T\right)^{-1} \cdot \frac{1}{N} \sum_{i=1}^N \theta_i v_i \\
\label{eq: regression normality h2} & = h\left(\frac{1}{N} \sum_{i=1}^N v_i v_i^T,\,  \frac{1}{N} \sum_{i=1}^N \theta_i v_i \right) \\
 \label{eq: regression normality h3} & = h(T)
\end{align}
where $h$ is the mapping $h(M,b) = e_\ell^T M^{-1}b$ and $T \equiv T(X)$ encodes the arguments of $h$ as a vector. It can be seen that $T$ equals
\begin{align} 
\nonumber T & = \left[\begin{array}{c} \operatorname{vec}\left(\frac{1}{N} \sum_{i=1}^N v_i v_i^T \right) \\ \frac{1}{N}\sum_{i=1}^N \theta_i v_i \end{array}\right] \\
\label{eq: T} & = \frac{1}{N} \sum_{i=1}^N \psi_i,
\end{align}
where $\psi_i \equiv \psi_i(X; \theta) = \left[\begin{array}{c} \operatorname{vec}\left(v_i v_i^T\right) \\ \theta_i v_i \end{array}\right].$ 
The gradient $\nabla h$ is given by
\begin{align} \label{eq: nabla h}
\nabla h(T) = \left[ \begin{array}{c} \operatorname{vec}(-M^{-T}b e_\ell^TM^{-T}) \\ M^{-1}e_\ell \end{array}\right]
\end{align}

The proof begins by decomposing $h(T) - h(\mathbb{E}T)$ using the mean value theorem:
\begin{align}
\nonumber & h(T) - h(\mathbb{E}T) \\
\label{eq: regression normality g1}  & =  \nabla h(\tilde{T})^T (T - \mathbb{E}T) \\
\nonumber & =  \nabla h(\mathbb{E}T)^T(T - \mathbb{E}T) + \left(\nabla h(\tilde{T}) - \nabla h(\mathbb{E}T)\right)^T(T - \mathbb{E}T) \\
\nonumber & = S + E,
\end{align}
where $S$ and $E$ are the terms
\begin{align*}
 S & = \nabla h(\mathbb{E}T)^T(T - \mathbb{E}T) & E & = \left(\nabla h(\tilde{T}) - \nabla h(\mathbb{E}T)\right)^T(T - \mathbb{E}T),
 \end{align*}
 and $\tilde{T}$ is a convex combination of $T$ and $\mathbb{E}T$. The next step is to show the following three intermediate results:
\begin{enumerate}
  \item $S = \ddot{w}^T\theta$ and $h(\mathbb{E}T) = \bar{w}^T\theta$
  \item $S/\sqrt{\operatorname{Var}(S)}$ is asymptotically normal if $\operatorname{Var}(S) \geq N^{-5/4}$.
  \item $E = O_P(N^{-1})$
\end{enumerate}


\paragraph{1. $\mathbf{S = \ddot{w}^T\theta}$ and $\mathbf{h(\mathbb{E}T) = \bar{w}^T\theta}$: } Let $M$ and $b$ denote
\begin{align*}
 M &= \frac{1}{N} \sum_{i=1}^N v_iv_i^T & b & = \frac{1}{N} \sum_{i=1}^N v_i \theta_i,
 \end{align*}
and let $\bar{M} = \mathbb{E}M$ and $\bar{b} = \mathbb{E}b$. To show that $S = \ddot{w}^T\theta$, observe that 
\begin{align*}
S & = \nabla h(\mathbb{E}T)^T (T - \mathbb{E}T) \\
  & = \operatorname{vec}(-\bar{M}^{-1} \bar{b} e_\ell^T \bar{M}^{-1})^T \operatorname{vec}(M - \bar{M}) + e_\ell^T\bar{M}^{-1}(b - \bar{b}) \\
  & = \operatorname{Tr}\left[-\bar{M}^{-1}e_\ell \bar{b}^T \bar{M}^{-1}(M - \bar{M})\right] + e_\ell^T\bar{M}^{-1}(b - \bar{b}) \\
  & = -\bar{b}^T \bar{M}^{-1}(M - \bar{M})\bar{M}^{-1}e_\ell  + e_\ell^T\bar{M}^{-1}(b - \bar{b}) \\
  & = -\bar{b}^T \bar{M}^{-1}M\bar{M}^{-1}e_\ell  + e_\ell^T\bar{M}^{-1}b  + \underbrace{e_\ell^T\bar{M}\bar{b} - e_\ell^T\bar{M}\bar{b}}_{=0}\\
  & = -e_\ell^T \bar{M}^{-1}(M\bar{M}^{-1}\bar{b}  - b)  \\
  & = -\sum_{i=1}^N e_\ell^T \left(\sum_{j=1}^N \mathbb{E}v_iv_i^T\right)^{-1}\left(\sum_{j=1}^N \mathbb{E}v_j v_j^T \left(\sum_{k=1}^N \mathbb{E}v_kv_k^T\right)^{-1}\mathbb{E}v_i  - v_i \right)\theta_i  \\
  & = \sum_{i=1}^N \ddot{w}_i\theta_i
\end{align*}
and that
\begin{align*}
h(\mathbb{E}T) & = e_\ell^T\bar{M}^{-1}\bar{b} \\
& = e_\ell^T \left(\sum_{i=1}^N \mathbb{E}v_iv_i^T\right)^{-1} \sum_{i=1}^N \mathbb{E}v_i \theta_i \\
& = \sum_{i=1}^N \bar{w}_i \theta_i
\end{align*}

\paragraph{2. Normality of $\mathbf{S/\sqrt{\textbf{Var}(S)}}$ if Var$\mathbf{(S) \geq N^{-5/4}}$:} The quantity $S/\sqrt{\operatorname{Var}(S)}$ can be written as
\begin{align*}
\frac{S}{\sqrt{\operatorname{Var}{S}}} & = \frac{1}{N \sqrt{\operatorname{Var}(S)}} \sum_{i=1}^N \nabla h(\mathbb{E}T)^T (\psi_i - \mathbb{E}\psi_i) \\
& = \sum_{i=1}^N s_i
\end{align*}
where $s_i = (N \sqrt{\operatorname{Var}(S)})^{-1} \nabla h(\mathbb{E}T)^T (\psi_i - \mathbb{E}\psi_i)$. It can be seen that the variables $\{s_i\}$ are zero mean, and as each $s_i$ is determined by $\psi_i$ (and hence by $v_i$), they satisfy a bounded dependence condition by Lemma \ref{le: dependency graph}. Finally, to bound $|s_i|$ we bound $\|\nabla h(\mathbb{E}T)\|_\infty$ and $\|\psi_i - \mathbb{E}\psi_i\|_\infty$ separately. To bound $\|\nabla h(\mathbb{E}T)\|_\infty$ we observe that
\begin{align*}
\| \nabla h(\bar{T}) \|_\infty & = \max\left( \| \bar{M}^{-1} \bar{b} e_\ell^T \bar{M}^{-1} \|_\infty, \|\bar{M}^{-1}e_\ell\|_\infty\right) \\
& = \max \left( \| \bar{M}^{-1} \bar{b}\|_\infty \cdot \|\bar{M}^{-1} e_\ell\|_\infty, \|\bar{M}^{-1}e_\ell\|_\infty \right)\\
& \leq \max \left( \| \bar{M}^{-1} \bar{b}\|_2 \cdot \|\bar{M}^{-1} e_\ell\|_2, \|\bar{M}^{-1}e_\ell\|_2 \right)\\
& \leq \max \left( \lambda_{\min}^{-2} B(d+K), \lambda_{\min} \right)
\end{align*}
where the final inequality holds because the smallest eigenvalue of $M$ is lower bounded by $\lambda_{\min}$ by Assumption \ref{as: lambda min} and because $\|\bar{b}\|_\infty \leq B$ by Assumption \ref{as: bounded vi}. To bound $\|\psi_i - \mathbb{E}\psi_i\|_\infty$ we observe that 
\[\|\psi_i - \mathbb{E}\psi_i\|_\infty \leq \max(B, B^2)\]
as $\psi_i$ and $\mathbb{E}\psi_i$ are quadratic functions of $v_i$ and $\mathbb{E}v_i$, which are bounded by Assumptions \ref{as: bounded vi}. As a result, if $\operatorname{Var}(S) \geq N^{-5/4}$, it holds that 
\begin{align*}
|s_i| &= \left|\frac{1}{N \sqrt{\operatorname{Var}(S)}} \nabla h(\mathbb{E}T)^T (\psi_i - \mathbb{E}\psi_i)\right| \\
& \leq N^{-3/8} \, \|\nabla h(\mathbb{E}T)^T \|_\infty\,  \|\psi_i - \mathbb{E}\psi_i\|_\infty \\
& = O\left(N^{-3/8}\right)
\end{align*}
Using these bounds with Theorem \ref{th: CLT dependency graph} implies that
\begin{align*}
\left|\mathbb{P}\left( \frac{S}{\sqrt{\operatorname{Var}(S)}} \leq t\right) - \Phi(t)\right| 
= O\left(N^{-1/8}\right),
\end{align*}
 and hence that $\operatorname{Var}(S) \geq N^{-5/4}$ implies asymptotic normality of $S/\sqrt{\operatorname{Var}}(S)$.

\paragraph{3. $\mathbf{E = O_P(N^{-1})}$:} We first observe that the proof of Proposition \ref{th: regression consistency} implies that $\|T - \mathbb{E}T\| = O_P(N^{-1/2})$. As $\tilde{T}$ is a convex combination of $T$ and $\mathbb{E}T$, it follows that $\|\tilde{T} - \mathbb{E}T\| = O_P(N^{-1/2})$, and hence for  $\nabla h$  given by \eqref{eq: nabla h} that $\| \nabla h(\tilde{T}) - \nabla h(\mathbb{E}T)\| = O_P(N^{-1/2})$. It follows that
\begin{align*}
E &=  \left(\nabla h(\tilde{T}) - \nabla h(\mathbb{E}T)\right)^T (T - \mathbb{E}T) \\
& \leq \|\nabla h(\tilde{T}) - \nabla h(\mathbb{E}T)\| \cdot \| (T - \mathbb{E}T)\| \\
& = O_P(N^{-1/2}) \cdot O_P(N^{-1/2}) \\
& = O_P(N^{-1})
\end{align*}

\paragraph{Putting it together:} Divide the sequence of experiments into two subsequences according to whether or not $\operatorname{Var}(S) \geq N^{-5/4}$. For the subsequence of experiments where $\operatorname{Var}(S) \geq N^{-5/4}$ holds, it follows that
\begin{align*}
\frac{h(T) - \mathbb{E}T}{\sqrt{\operatorname{Var}(S)}} & = \frac{S}{\sqrt{\operatorname{Var}(S)}} + \frac{E}{\sqrt{\operatorname{Var}(S)}} \\
& = \frac{S}{\sqrt{\operatorname{Var}(S)}} + o_P(1) 
\end{align*}
and hence that 
\begin{equation}
\label{eq: normality result} \frac{h(T) - \mathbb{E}T)}{\sqrt{\operatorname{Var}(S)}} \rightarrow N(0,1)
\end{equation}
when $\operatorname{Var}(S) \geq N^{-5/4}$. Therefore, for this subsequence it holds that
\begin{align}
\nonumber & \mathbb{P}\left( \hat{\beta}_\ell - \beta_\ell \geq  U \right) \\
\label{eq: general result end 0}   & \hskip.5cm = \mathbb{P}\left( h(T) \geq  \max_\vartheta \left[ \bar{w}^T \vartheta + z_{1-\alpha} \sqrt{\operatorname{Var}(\ddot{w}^T\vartheta)} \right]\right) \\
\nonumber  & \hskip.5cm \leq \mathbb{P}\left( h(T) \geq  \bar{w}^T \theta + z_{1-\alpha} \sqrt{\operatorname{Var}(\ddot{w}^T\theta)} \right) \\
\nonumber  & \hskip.5cm = \mathbb{P}\left( \frac{ h(T)  - \bar{w}^T \theta}{\sqrt{\operatorname{Var}(\ddot{w}^T\theta)}} \geq z_{1-\alpha} \right) \\
\label{eq: general result end 1}   & \hskip.5cm  = \mathbb{P}\left( \frac{h(T)  - h(\mathbb{E}T)}{\sqrt{\operatorname{Var}(S)}} \geq z_{1-\alpha} \right) \\
\label{eq: general result end 2}   & \hskip.5cm  =  \alpha + o(1),
\end{align}
where \eqref{eq: general result end 0} holds by definition of $U$ in \eqref{eq: regression normality U}, \eqref{eq: general result end 1} holds because $S = \ddot{w}^T\theta$ and $h(\mathbb{E}T) = \bar{w}^T\theta$, and \eqref{eq: general result end 2} holds by\eqref{eq: normality result}

On the other hand, for the subsequence in which $\operatorname{Var}(S) < N^{-5/4}$, it follows by Chebychev's inequality that $S = O_P(N^{-5/8})$, and hence that
\begin{align}
 \nonumber h(T) - \mathbb{E}T &= S + E  \\
 \nonumber &= O_P(N^{-5/8}) + O_P(N^{-1})  \\
 \label{eq: op result} & = o_P(N^{-5/9})
\end{align}
and hence that $h(T) - h(\mathbb{E}T) = o_P(N^{-5/9})$. This implies 
\begin{align}
\nonumber & \mathbb{P}\left( \hat{\beta}_\ell - \beta_\ell \geq U' \right) \\
\nonumber & \hskip.5cm = \mathbb{P}\left( h(T) \geq \max_{\vartheta} \left[ \bar{w}^T \vartheta + \epsilon N^{-5/9} \right]\right) \\
\nonumber & \hskip.5cm \leq \mathbb{P}\left( h(T) \geq \bar{w}^T\theta + \epsilon N^{-5/9}\right) \\
\nonumber & \hskip.5cm \leq \mathbb{P}\left( \frac{ h(T) - \bar{w}^T\theta}{N^{-5/9}} \geq \epsilon\right) \\
\nonumber & \hskip.5cm = \mathbb{P}\left( \frac{ h(T) - h(\mathbb{E}T)}{N^{-5/9}} \geq \epsilon\right) \\
\label{eq: general result end 3} & \hskip.5cm = o(1), 
\end{align}
where \eqref{eq: general result end 3} is implied by \eqref{eq: op result}. Combining \eqref{eq: general result end 2} and \eqref{eq: general result end 3} implies that $\max(U, U')$ upper bounds $\hat{\beta}_\ell - \beta_\ell$ with at least the desired asymptotic coverage probability for both subsequences. As $\hat{\beta}_\ell = w^TY$, it follows that $w^TY - \max(U, U')$ lower bounds $\beta_\ell$ with the same probability.

The proof of coverage for the upper bound $w^TY + \min(L, L')$ is analogous.

\end{proof}

\section{Summary Information for Data Sets} \label{sec: data tables}

Here we give tables to summarize the simulated cholera trial outcomes released by \cite{perez2014assessing} (and available in the  R package \texttt{inferference} described in \citeappendix{saul2017recipe}), insurance takeup by farmers in the experiment of \cite{cai2015social}, and wristband-wearing decisions made by students in the school interventions described in \citeappendix{paluck2016changing}. An aggregate table of the actual outcomes for the cholera trial is given by Table \ref{table: vaccine real data}.

\begin{table}[h!]
\begin{center}
  \begin{tabular}{l|r|r}
     Vaccination Rate & Cases, placebo & Cases, vaccine   \\
     \hline
     0-25\% &   50/111 (45\%) &  37/144 (26\%)\\
     25-50\% & 73/282 (25\%)  &  81/464 (17\%) \\
     50-75\% &  24/183 (13\%)& 42/480 (9\%) \\
     75-100\% &  2/13 (15\%) & 4/110 (4\%) \\
     \hline
     Total & 149/589 (25\%) & 164/1198 (14\%)
  \end{tabular}
\end{center}
\caption{Outcomes for simulated cholera trial dataset}
\label{table: cholera sim summary}
\end{table}

A comparison of Tables \ref{table: cholera sim summary} and \ref{table: vaccine real data} shows that the actual population is much larger than the simulated data set ($N$=74K vs 1.7K), and has much lower rates of cholera incidence (0.003 vs 0.17). As the simulated outcomes were fit to the actual data as a function of treatment (including neighborhood vaccination) and observed covariates (age and distance to water), we speculate that the covariate distributions were shifted to a higher-risk subpopulation.

\cite{perez2014assessing} gives some information regarding the simulation parameters, stating that the outcome for unit $j$ in neighborhood $i$ was determined by a logistic regression model
\[ \text{log odds} P(Y_{ij} = 1) = 0.5 -0.788a - 2.953\alpha_i -0.098 X_{ij1} - 0.145X_{ij2} + 0.35a \alpha_i,\]
where $a$ is the unit treatment; $\alpha_i$ was the neighborhood vaccination rate if $j$'s treatment equals $a$; $X_{ij1}$ and $X_{ij2}$ denote $j$'s age (in decades) and distance to the nearest river (in km), which were randomly generated. Treatments were assigned i.i.d. Bernoulli(2/3). Units were generated by simulating the covariate $X$, and then selecting units to participate in the trial by a logistic regression model
\[ \text{log odds} P(\text{participate}_{ij}) = 0.2727 - 0.0387 X_{ij1} + 0.2179 X_{ij2} + b_i,\]
where $b_i$ is a random neighborhood-level effect.

Table \ref{table: cai summary} gives summary statistics for the experiment of \cite{cai2015social}. In this experiment, units were randomized not only by first or second round and by high or low information, but also whether or not the second round units were shown a list of names corresponding to people who had purchased the insurance product in the first round. When studying the effects of informal communication between rounds, the analysis of \cite{cai2015social} is restricted to the ``no info'' group which excludes those units who were shown such a list, and we follow their example in our analysis as well.

\begin{table}[h!]
\begin{center}
  \begin{tabular}{l|r}
Exposure  & Insurance Takeup \\
(\# of high info friends) &  \\
     \hline
     0 &  116/293 (40\%) \\
     1 &  196/454 (43\%) \\
     2 &  96/171 (56\%)\\
     3 &  21/46 (46\%)\\
     4 & 5/8 (63\%) \\
     5 &  1/1 (100\%)\\
     \hline    
     Total & 435/973 (45\%)
  \end{tabular}
\end{center}
\caption{Outcomes for insurance purchases of farmers, for ``no info'' group used by \cite{cai2015social}}
\label{table: cai summary}
\end{table}

Table \ref{table: paluck summary} gives summary statistics for the experiment of \citeappendix{paluck2016changing}. For this experiment, the analysis of the wristband outcomes in \citeappendix{paluck2016changing} and \cite{aronow2017estimating} restricts to units who had positive probability of being assigned to one of the four treatment outcomes that were considered, meaning that they were eligible units with at least one eligible friend, and we follow their example in our analysis as well.

\begin{table}[h!]
\begin{center}
  \begin{tabular}{l|l|r}
Treated? & $\geq 1$ Treated Friend?  & Outcome \\
     \hline
No & No & 12/158 (7\%) \\
No & Yes & 61/374 (16\%) \\
Yes & No & 53/201 (26\%)\\
Yes & Yes & 103/333 (31\%)\\
\hline
\multicolumn{2}{r}{Total} & 229/1066 (21\%)
  \end{tabular}
\end{center}
\caption{Outcomes for school intervention}
\label{table: paluck summary}
\end{table}

\vskip3cm

 \bibliographystyleappendix{apalike}
 {\footnotesize
 \bibliographyappendix{bibfile}
}

\end{document}